\newtheorem{theorem}{Theorem}
\newtheorem{lemma}{Lemma}
\newtheorem{corollary}{Corollary}
\newtheorem{conjecture}{Conjecture}
\title{From Hotelling to Load Balancing: Approximation and the
	Principle of Minimum Differentiation\\{\small (full version)}}
\author{Matthias Feldotto\thanks{Heinz Nixdorf Institute \& Departement of Computer Science, Paderborn University, Paderborn, Germany \texttt{feldi@mail.upb.de}} \and Pascal Lenzner\thanks{Algorithm Engineering Group, Hasso Plattner Institute, Potsdam, Germany, \texttt{\{pascal.lenzner,louise.molitor\}@hpi.de}} \and Louise Molitor\footnotemark[2] \and Alexander Skopalik\thanks{Faculty of Electrical Engineering, Mathematic \& Computer Science, University of Twente, Enschede, The Netherlands \texttt{a.skopalik@utwente.nl}}}
\date{~}
\newcommand{\naturals}{\ensuremath{\mathbb{N}}}
\newcommand{\player}{\ensuremath{i}}
\newcommand{\players}{\ensuremath{\mathcal{N}}}
\newcommand{\strategies}{\ensuremath{S}}
\newcommand{\strategiesPlayer}{\ensuremath{\strategies_i}}
\newcommand{\strategy}{\ensuremath{s}}
\newcommand{\strategyPlayer}{\ensuremath{\strategy_i}}
\newcommand{\strategyProfile}{\ensuremath{\textsf{s}}}
\newcommand{\strategyProfiles}{\ensuremath{\textsf{S}}}
\newcommand{\utility}{\ensuremath{u}}
\newcommand{\utilityPlayer}{\ensuremath{\utility_i}}
\newcommand{\socialcosts}{\ensuremath{SC}}
\newcommand{\quality}{\ensuremath{Q}}
\newcommand{\clients}{\ensuremath{Z}}
\newcommand{\client}{\ensuremath{z}}
\newcommand{\choiceFunction}{\ensuremath{f}}
\newcommand{\choiceFunctions}{\ensuremath{\textsf{f}}}
\newcommand{\costsCl}{\ensuremath{C}}
\newcommand{\costsClient}{\ensuremath{\costsCl_{\client}}}
\newcommand{\clientStrategyWithout}{\ensuremath{t}}
\newcommand{\clientStrategy}{\ensuremath{\clientStrategyWithout_{\client}}}
\newcommand{\clientStrategyProfiles}{\ensuremath{\textsf{F}}}
\newcommand{\clientStrategyProfile}{\ensuremath{\choiceFunctions}}
\newcommand{\spe}{\ensuremath{\mathrm{SPE}}}
\newcommand{\approxSpe}{\ensuremath{\rho\mathrm{-\spe}}}
\newcommand{\load}{\ensuremath{\ell}}
\newcommand{\loadPlayer}{\ensuremath{\load_i}}
\newcommand{\nal}{\ensuremath{\left(1-\alpha\right)}}
\newcommand{\al}{\ensuremath{\alpha}}
\newcommand{\approxPairing}{\ensuremath{\psi^\text{pair}}}
\newcommand{\approxPairingGeneral}{\ensuremath{\approxPairing_{n,\alpha}}}
\newcommand{\approxOptimum}{\ensuremath{\psi^\text{opt}}}
\newcommand{\approxOptimumGeneral}{\ensuremath{\approxOptimum_{n,\alpha}}}
\newcommand{\potentialFunction}{\ensuremath{\Phi}}
\newcommand{\contFraQua}{\ensuremath{\tilde{K}}}
\newcommand{\contFraLin}{\ensuremath{\hat{K}}}
\def\contFracOpe{%
	\operatornamewithlimits{%
		\mathchoice{
			\vcenter{\hbox{\huge $\mathcal{K}$}}%
		}{
			\vcenter{\hbox{\Large $\mathcal{K}$}}%
		}{
			\mathrm{\mathcal{K}}%
		}{
			\mathrm{\mathcal{K}}%
		}
	}
}
\begin{document}

\maketitle

\begin{abstract}
 \noindent Competing firms tend to select similar locations for their stores. This phenomenon, called the principle of minimum differentiation, was captured by Hotelling with a landmark model of spatial competition but is still the object of an ongoing scientific debate. Although consistently observed in practice, many more realistic variants of Hotelling’s model fail to support minimum differentiation or do not have pure equilibria at all. In particular, it was recently proven for a generalized model which incorporates negative network externalities and which contains Hotelling’s model and classical selfish load balancing as special cases, that the unique equilibria do not adhere to minimum differentiation. Furthermore, it was shown that for a significant parameter range pure equilibria do not exist. We derive a sharp contrast to these previous results by investigating Hotelling’s model with negative network externalities from an entirely new angle: approximate pure subgame perfect equilibria. This approach allows us to prove analytically and via agent-based simulations that approximate equilibria having good approximation guarantees and that adhere to minimum differentiation exist for the full parameter range of the model. Moreover, we show that the obtained approximate equilibria have high social welfare.
 \end{abstract}
 
 \section{Introduction}
 
 The choice of a profitable facility location is one of the core strategic decisions for firms competing in a spatial market. Finding the right location is a classical object of research and has kindled the rich and interdisciplinary research area called Location Analysis~\cite{DBLP:journals/eor/RevelleE05, DBLP:journals/transci/EiseltLT93, doi:10.1002/9780470400531.eorms0477}. In this paper we investigate one of the landmark models of spatial competition and strategic product differentiation where facilities offering the same service for the same price compete in a linear spatial market. Originally introduced by Hotelling~\cite{10.2307/2224214} and later extended by Downs~\cite{doi:10.1086/257897} to model political competition, the model is usually referred to as the \emph{Hotelling-Downs model}. It assumes a market of infinitely many clients which are distributed evenly on a line and finitely many firms which want to open a facility and which strategically select a specific facility location in the market to sell their service. Every client wants to obtain the offered service and selects the nearest facility to get it. The utility of the firms is proportional to the number of clients visiting their facility. Thus, the location decision of a firm depends on the facility locations of all its competitors as well as on the anticipated behavior of the clients. This two-stage setting is challenging to analyze but at the same time yields a plausible prediction of real-world phenomena. 
 
 One such phenomenon is known as the \emph{principle of minimum differentiation}~\cite{boulding1941economic,Eaton1975} and it states that competing firms selling the same service tend to co-locate their facilities instead of spreading them evenly along the market. This can be readily observed in practice, e.g., stores of different fast-food chains or consumer goods shops are often located right next to each other. For the original version where clients simply select the nearest facility, Eaton and Lipsey~\cite{Eaton1975} proved in a seminal paper that for $n\neq 3$ competing firms the Hotelling-Downs model has pure subgame perfect equilibria which respect the principle of minimum differentiation. 
 
 However, the original Hotelling-Downs model is overly simple and lacks crucial properties found in practice. For example it does not incorporate negative network externalities for the clients. When choosing which facility to patronize real clients would not only evaluate distances but also how congested a facility is. Many other clients visiting  the same facility induce a higher waiting time to get serviced and thus it may be better for a client to select a different facility which is farther away but has fewer clients. This natural and more realistic variant, where the cost function of a client is a linear combination of distance and waiting time, was proposed by Kohlberg~\cite{KOHLBERG1983211} and will be the focus of our attention. Kohlberg's model is especially interesting, since it can be interpreted as an interpolation of two extreme models: the Hotelling-Downs model, where clients select the nearest facility and classical \emph{Selfish Load Balancing}~\cite{vocking2007selfish}, where clients select the least congested facility.
 
 For Kohlberg's model it is known that no pure subgame perfect equilibria exist where the facility locations are pairwise different. Furthermore, in a recent paper Peters et al.~\cite{Peters2018} show for up to six facilities that pure equilibria exist if and only if there is an even number of facilities and the clients' cost function is tilted heavily towards preferring less congested facilities. Moreover, in sharp contrast to the principle of minimum differentiation, they show that in these unique equilibria only two facilities are co-located. 
 
 In this paper we re-establish the principle of minimum differentiation for Kohlberg's model by considering \emph{approximate pure subgame perfect equilibria}. We show analytically and by extensive agent-based simulations that for any client cost function which is a linear combination of distances and congestion approximate subgame perfect equilibria exist which respect the principle of minimum differentiation and where each firm can only increase her utility by a small multiplicative factor by deviating to another facility location. Moreover, we show that the obtained approximate equilibria are also close to optimal for the whole society of clients.
 
We believe that in contrast to studying exact subgame perfect equilibria, investigating approximate subgame perfect equilibria yields more reliable predictions since the study of exact equilibria assumes actors who radically change their current strategy even if they can improve only by a tiny margin. In the real world this is not true, as many actors only move out of their ”comfort zone” if a significant improvement can be realized. This threshold behavior can naturally be modeled via a suitably chosen approximation factor. Furthermore, approximate equilibria are the only hope for a plausible prediction for many variants of the Hotelling-Downs model where exact equilibria do not exist. To the best of our knowledge, approximate equilibria have not been studied before in the realm of Location Analysis.
 
 \subsection{Related Work}
 The Hotelling-Downs model was also analyzed for non-linear markets, e.g., on graphs~\cite{palvolgyi2011hotelling,DBLP:journals/corr/FournierS16,DBLP:journals/corr/Fournier16},  fixed locations on a circle~\cite{10.2307/3003323}, finite sets of locations~\cite{Nunez2016,Nunez2017}, and optimal interval division~\cite{tian2015optimal}. Moreover, many facility location games are variants of the Hotelling-Downs model and there is a rich body of work analyzing competitive facility location in the non-cooperative setting, e.g. Vetta~\cite{DBLP:conf/focs/Vetta02}, Cardinal \& Hoefer~\cite{DBLP:journals/tcs/CardinalH10}, Saban \& Stier-Moses~\cite{DBLP:conf/wine/SabanM12}, Drees et al.~\cite{drees2016strategic}, and in the cooperative setting, e.g. Goemans \& Skutella~\cite{DBLP:journals/jal/GoemansS04}. 
 Additionally, Fotakis \& Tzamos~\cite{fotakis2014power} and Feldman et al.~\cite{feldman2016voting} investigate facility location mechanisms. Another closely related family of games is the class of Voronoi games~\cite{DBLP:journals/tcs/AhnCCGO04,DBLP:conf/esa/DurrT07,DBLP:journals/tcs/BandyapadhyayBDS15}. We phrase our model in terms of a facility location game, but, in contrast to the above works on facility location and Voronoi games our clients do not necessarily select the nearest facility to get serviced.
 
 Kohlberg~\cite{KOHLBERG1983211} originally defined the cost for a client at location $x$ to visit a facility at location $y$ given some client distribution as the sum of the distance between $x$ and $y$ on the line and $\gamma$ times the number of clients currently served by the chosen facility. Moreover, he claims that no subgame perfect equilibrium exists for more than two facilities. This claim was later refuted by Peters et al.~\cite{Peters2018} who proved the existence of a unique subgame perfect equilibrium for $n=4$ and $n=6$ for large values of the parameter $\gamma$. Moreover, they conjecture that a unique subgame perfect equilibrium exists for any even number of facilities if $\gamma$ is sufficiently large and they give the corresponding equilibrium candidate. In sharp contrast to the principle of minimum differentiation, the equilibrium candidate exhibits only two facilities which are co-located.
 Additionally, they investigate an asymmetric variant of this model, where the waiting time of each facility can be different. 
 One extreme case of Kohlberg's model is the setting in which the clients are only interested in selecting the least congested facility independent of its distance. 
 This setting is captured by simple load balancing games~\cite{vocking2007selfish} and it is easy to see that in this case any location vector of the facilities must be a subgame perfect equilibrium. 
 
 In our model facilities offer their service for the same price. Models where facilities can also strategically set the price have been considered~\cite{10.2307/20075765, navon1995product, GRILO2001385, 10.2307/1911955, osborne1986nature}.   
 Setting different prices under network externalities was investigated by Heikkinen~\cite{DBLP:journals/or/Heikkinen14}. Moreover, Ahlin \& Ahlin~\cite{ahlin2013product} show in a version with pricing that negative network externalities lead to less differentiation between the facilities.  
 
 Other recent work investigates different client attraction functions instead of simply using the distance to the facilities. Ben-Porat \& Tennenholtz~\cite{DBLP:conf/wine/Ben-PoratT17} use a connection to the Shapley value to show the existence of pure equilibria through a potential function argument.
 Whereas Feldman et al.~\cite{DBLP:conf/aaai/FeldmanFO16} consider the case where facilities have a limited attraction interval and the uniformly distributed clients decide randomly which facility to choose if attracted by more than one facility. Interestingly they prove that pure Nash equilibria exist and that the Price of Anarchy is low. Later Shen \& Wang~\cite{shen2017hotelling} generalized the model to arbitrary client distributions.
 
 Using agent-based simulations for variants of the Hotelling-Downs model seems to be a quite novel approach. We could find only the recent work of van Leeuwen \& Lijesen~\cite{van2016agents} in which the authors claim to present the first such approach. They study a multi-stage variant with pricing which is different from our setting. 
 
 \subsection{Our Contribution}
 We study approximate pure subgame perfect equilibria in Kohlberg's model of spatial competition with negative network externalities in which $n$ facility players strategically select a location in a linear market. Our slightly reformulated model has a parameter $0\leq \alpha \leq 1$, where $\alpha = 0$ yields the original Hotelling-Downs model, i.e., clients select the nearest facility, and where $\alpha = 1$ yields classical selfish load balancing, i.e., clients select the least congested facility.  
 
 First, we study the case $n=3$, which for $\alpha = 0$ is the famous unique case of the Hotelling-Downs model where exact equilibria do not exist. We show that for all $\alpha$ an approximate subgame perfect equilibrium exists with approximation factor $\rho \leq 1.2808$. Moreover, for $\alpha = 0$ we show that this bound is tight. 
 
 Next, we consider the facility placement which is socially optimal for the clients and analyze its approximation factor, i.e., we answer the question how tolerant the facility players have to be to accept the social optimum placement for the clients. For this placement, in which the facilities are uniformly distributed along the linear market, we derive exact analytical results for $4\leq n \leq 10$. Building on this and on a conjecture specifying the facility which has the best improving deviation, we generalize our results to $n\geq 4$. We find that the obtained approximation factor $\rho$ approaches $1.5$ for low $\alpha$ which implies that in these cases facility players must be very tolerant to support these client optimal placements. 
 
 We contrast this by our main contribution, which is a study of a facility placement proposed by Eaton \& Lipsey~\cite{Eaton1975} from an approximation perspective. This placement supports the principle of minimum differentiation since all but at most one facility are co-located with another facility and at the same time it is an exact equilibrium for both extreme cases of the model, i.e., for $\alpha = 0$ and $\alpha =1$. We provide analytical proofs that for these placements $\rho \leq 1.0866$ holds for $4\leq n \leq 10$. Moreover, based on another conjecture, we show that for arbitrary even $n\geq 10$ we get $\rho \approx 1.08$. 
 
 Our conjectures used for proving the general results are based on the analytical results for $n\leq 10$ and on extensive agent-based simulations of a discretized variant of the model. It turns out that these simulations yield reliable predictions for the original model and we also use them for providing promising results for the general case with odd $n$. In particular, we demonstrate that empirically we have $\rho \approx 1.08$ for arbitrary $n\geq 10$.   
 
 Last but not least, we show that the facility placements proposed by Eaton \& Lipsey~\cite{Eaton1975} are also socially good for the clients. We compare their social cost with the cost of the social optimum placement and prove a low ratio for all $\alpha$.  
 
 Overall, we prove that for Kohlberg's model facility placements exist which 
 \begin{itemize}
 	\item[(1)] adhere to the principle of minimum differentiation,
 	\item[(2)] are close to stability in the sense that facilities can only improve their utility by at most $8\%$ by deviating and
 	\item[(3)] these placements are also socially beneficial for all clients.
 \end{itemize}    

\section{Model and Preliminaries}\label{model}
We model the scenario as a two-stage game with two types of players, a set of \emph{facilities} $\players$ each offering the same service for the same price and a set of
\emph{clients} $\clients$ each choosing one facility to get serviced from.
There are $n$ facility players $\players= \{1, \dots,n\}$, which choose a location in the interval $\strategies = [0,1]$.
We denote a strategy vector for the facility players as $\strategyProfile = (\strategy_1, \ldots, \strategy_n)$, where $\strategyPlayer \in \strategies$ denotes the chosen location of facility player $i$.
For notational purposes, $(\strategyProfile_{-i}, \strategyPlayer')$ denotes the vector that results when player $i$ changes her strategy in $\strategyProfile$ from $\strategy_i$ to $\strategy_i'$.
For the clients, we consider a continuum of infinitely many clients represented by the interval $\clients = [0,1]$.
Every point $\client \in \clients$ corresponds to a client that chooses a facility $i\in \players$ to get serviced.
Hence, the strategy space $\strategies_{\client}$ of a client $\client \in \clients$ is the set of facilities, i.e., $\strategies_{\client} = \players = \{1, \dots,n\}$, with $\clientStrategy \in \players$ being the current strategy selection.
We define $\choiceFunctions: \strategyProfiles \times \clients \to \players$ as the mapping induced by the clients' facility choices.
Given a facility location vector $\strategyProfile$, a client $\client \in \clients$ selects the facility $\choiceFunctions(\strategyProfile, \client)$.
To express strategy changes of single agents, we define by $(\choiceFunctions_{-\client}, \choiceFunction'_{\client})$ the choice function which results, if only the mapping of the agent at $\client$ changes from the value $\choiceFunction(\client)$ to the value $\choiceFunction'(\client)$.
The set of all possible client agent strategy profiles is given by $\clientStrategyProfiles = \players^{\strategyProfiles \times \clients}$.
A strategy profile (of the facilities and clients) is a pair $(\strategyProfile, \choiceFunctions) \in \strategyProfiles \times \clientStrategyProfiles$, where $\strategyProfile$ is the vector of strategies of the facility players and $\choiceFunctions$ is the choice function determining the strategies of the client agents.

To measure how many clients select a specific strategy, we consider only client choice functions $\choiceFunctions$, where the interval $\clients$ is partitioned into $n$ finite sets of intervals $\mathcal{J}_1(\strategyProfile, \choiceFunctions),\mathcal{J}_2(\strategyProfile, \choiceFunctions),$ $\ldots, \mathcal{J}_{|\players|}(\strategyProfile, \choiceFunctions)$, where $\mathcal{J}_i(\strategyProfile, \choiceFunctions) = \{\mathcal{I}_i^1(\strategyProfile, \choiceFunctions),\dots,\mathcal{I}_i^{k_i}(\strategyProfile, \choiceFunctions)\}$, for some $k_i$, with disjoint intervals $\mathcal{I}_i^j \subseteq [0,1]$ and such that for all clients $\client \in \mathcal{I}_i^j(\strategyProfile, \choiceFunctions)\ \forall j \in \{1, \ldots, k_i\}$ we have $\choiceFunctions(\strategyProfile,\client) = i$. We call such client mappings \emph{measurable mappings}.

Given a measurable client mapping $\choiceFunctions$ and the corresponding induced partition into $n$ finite sets of intervals $\mathcal{J}_1(\strategyProfile, \choiceFunctions), \dots, \mathcal{J}_n(\strategyProfile, \choiceFunctions)$ where $|\mathcal{I}_i^j(\strategyProfile, \choiceFunctions)|$ is the length of interval $\mathcal{I}_i^j(\strategyProfile, \choiceFunctions)$. We define the \emph{load of facility $i$} as
$$\loadPlayer(\strategyProfile, \clientStrategyProfile) = \sum_{\mathcal{I}_i^j(\strategyProfile, \choiceFunctions) \in \mathcal{J}_i(\strategyProfile, \choiceFunctions)} |\mathcal{I}_i^j(\strategyProfile, \choiceFunctions)|.$$

Given a facility location vector $\strategyProfile$ and a measurable client mapping $\choiceFunctions$, the \emph{cost} $\costsClient$ of a single client at some point $\client\in \clients$ is proportional to her distance from her chosen facility $\choiceFunctions(\strategyProfile, \client)$ and the current load $\load_{\choiceFunctions(\strategyProfile,\client)}(\strategyProfile,\choiceFunctions)$ of that facility.
The relative influence of these two objectives is adjusted via the parameter $\alpha\in  [0,1]$.
Thus, the cost of a client at point $\client \in \clients$ is 
$$\costsClient(\strategyProfile,\clientStrategyProfile) = (1-\alpha) \cdot |\strategy_{\choiceFunctions(\strategyProfile,\client)} - \client| + \alpha \cdot \load_{\choiceFunctions(\strategyProfile,\client)}(\strategyProfile, \clientStrategyProfile).$$

For $\alpha=0$, where clients simply ignore the facility loads, this corresponds to the client cost function from Hotelling's original model~\cite{10.2307/2224214}, where clients simply select the nearest facility.
For $\alpha=1$, where clients are oblivious to distances, this corresponds to the client cost function in simple load balancing games on identical machines~\cite{vocking2007selfish}, where clients select the least loaded facility.

The \emph{utility} $\utilityPlayer(\strategyProfile, \clientStrategyProfile)$ of a facility  $i$ for facility location vector $\strategyProfile$ and some client mapping $\clientStrategyProfile$ equals its induced load, that is $$\utilityPlayer(\strategyProfile, \clientStrategyProfile) = \loadPlayer(\strategyProfile, \clientStrategyProfile).$$   

Similar to  (approximate) pure Nash equilibria  we define  (approximate) pure equilibria in the two-stage game  using the concept of subgame perfect equilibria.
We consider an approximate variant in which the players of the first stage (our facilities) are satisfied with approximate states while the client agents in the second stage still play optimal strategies.

\paragraph{Approximate Pure Subgame Perfect Equilibrium}
A strategy profile $(\strategyProfile, \choiceFunctions)$ is a $\rho$-\emph{approximate pure subgame perfect equilibrium} ($\rho$-SPE) if and only if  the following two conditions are satisfied:
\begin{enumerate}
	\item for all $\player \in \players$, $\utilityPlayer(\strategyProfile, \choiceFunctions) \geq \rho \cdot \utilityPlayer((\strategyProfile_{-\player},\strategyPlayer'), \choiceFunctions)$ for all $\strategyPlayer' \in \strategiesPlayer$
	\item for all $\strategyProfile \in \strategyProfiles$ and for all $\client \in \clients$, $\costsClient(\strategyProfile, \choiceFunctions) \leq \costsClient(\strategyProfile, (\choiceFunctions_{-\client}, \choiceFunction'_{\client}))$  for any alternative choice function $\choiceFunctions'\in \clientStrategyProfiles$.
\end{enumerate}
Let $\approxSpe \subseteq \strategyProfiles\times\clientStrategyProfiles$ be the set of all $\rho$-approximate subgame perfect  equilibria in the game.
For $\rho=1$, we call the state a pure subgame perfect equilibrium.

\paragraph{Client Behavior in the Subgame}
Given a facility strategy profile~$\strategyProfile$, there always exists a client equilibrium which fulfills the second condition of the equilibrium definition. This was shown in~\cite{Peters2018}, but can also easily be verified by the following potential function:
\[
\potentialFunction(\strategyProfile,\clientStrategyProfile) = (1-\alpha)\int_{0}^{1} \delta(x,\choiceFunctions(\strategyProfile, x)) dx + \alpha \sum_{i=1}^{n} \frac{(\loadPlayer(\strategyProfile,\clientStrategyProfile))^2}{2}, 
\]
where $\delta(x,\choiceFunctions(\strategyProfile, x))$ denotes the distance from $x$ to her chosen facility $\choiceFunctions(\strategyProfile, x)$ at location $\strategy_{\choiceFunctions(\strategyProfile, x)}$, i.e., $\delta(x,\choiceFunctions(\strategyProfile, x)) = |\strategy_{\choiceFunctions(\strategyProfile, x)} - x|$. 

A client equilibrium $\clientStrategyProfile$ is a measurable client mapping, i.e., for any facility $i$ there exist finitely many intervals of clients that select facility $i$.
We extend this definition to a much stronger notion of mappings in which all the clients that select some facility $i$ form a single interval of $[0,1]$, formally $|\mathcal{J}_i(\strategyProfile, \clientStrategyProfile)| = 1$ for every facility $i$.
Thus, for any fixed facility location vector $\strategyProfile$ we consider only client mappings $\clientStrategyProfile$, where the interval $[0,1]$ is partitioned into $n$ closed intervals $\mathcal{I}_1(\strategyProfile, \clientStrategyProfile),\dots,\mathcal{I}_n(\strategyProfile, \clientStrategyProfile)$ such that for all clients $\client \in \mathcal{I}_i(\strategyProfile, \clientStrategyProfile)$ we have $\clientStrategyProfile(\strategyProfile, \client) = i$.
We call such client mappings \emph{proper client mappings}.
Moreover, by re-naming facilities we can always ensure that $\strategy_1 \leq \strategy_2 \leq \dots \leq \strategy_n$ which implies that the intervals $\mathcal{I}_1(\strategyProfile, \clientStrategyProfile),\dots,\mathcal{I}_n(\strategyProfile, \clientStrategyProfile)$ are consecutive in $[0,1]$ such that $\mathcal{I}_i(\strategyProfile, \clientStrategyProfile) = [\beta_{i-1},\beta_{i}]$ with $\beta_0 = 0$ and $\beta_{n} = 1$.
A proper client mapping that is a client equilibrium is called \emph{proper client equilibrium.}
Any measurable client equilibrium can be transformed into a proper client equilibrium without changing the utilities for the facilities.
Peters et al.~\cite{Peters2018} show that such a transformation is always possible and that it results in a unique proper client equilibrium.

Therefore, we assume in the following the clients to be in the unique proper client equilibrium for any facility location vector.
This is possible since from a facility's perspective all client equilibria induce identical loads.
For a facility location vector $\strategyProfile$ we call the corresponding unique proper client equilibrium $\clientStrategyProfile_\strategyProfile$ the \emph{$\strategyProfile$-induced} client equilibrium.
Therefore, the client strategy mapping $\clientStrategyProfile_\strategyProfile$ is implicitly given and we omit it in the following definitions:
For each facility $i$ let $\mathcal{I}_i(\strategyProfile) = \mathcal{I}_i(\strategyProfile, \clientStrategyProfile_\strategyProfile)= [\beta_{i-1}, \beta_i]$ be the interval of clients using facility $i$ in this equilibrium with $\beta_0 = 0$ and $\beta_n=1$.
The load of facility $i\in \players$ is given by $\loadPlayer(\strategyProfile) = \loadPlayer(\strategyProfile, \clientStrategyProfile_\strategyProfile) = |\mathcal{I}_i(\strategyProfile)|$, the utility of facility $i \in \players$ by  $\utilityPlayer(\strategyProfile) = \utilityPlayer(\strategyProfile, \clientStrategyProfile_\strategyProfile) = \loadPlayer(\strategyProfile)$.
The costs of a client at position $\client$ are defined by $\costsClient(\strategyProfile)=\costsClient(\strategyProfile, \clientStrategyProfile_\strategyProfile)$.

\section{Analytical Results}\label{sec:analytical}
We first prove that the potential function $\potentialFunction(\strategyProfile,\clientStrategyProfile)$ suggested in Section~\ref{model} works.

\begin{lemma}\label{lem_local_pot_min_is_EQ}
	For any facility location vector $\strategyProfile$, a measurable client mapping is a client equilibrium if and only if it locally minimizes $$\potentialFunction(\strategyProfile,\clientStrategyProfile) = (1-\alpha)\int_{0}^{1} \delta(x,\choiceFunctions(\strategyProfile, x)) dx + \alpha \sum_{i=1}^{n} \frac{(\loadPlayer(\strategyProfile,\clientStrategyProfile))^2}{2}.$$
\end{lemma}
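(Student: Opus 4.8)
The plan is to show that $\potentialFunction(\strategyProfile,\cdot)$ is an \emph{exact potential} for the client subgame in the natural continuum sense, and to read off the equivalence from a single ``potential difference equals cost difference of the reassigned clients'' identity. Throughout I fix $\strategyProfile$, and by \emph{locally minimizes} I mean: there is an $\varepsilon>0$ such that no measurable mapping that differs from $\clientStrategyProfile$ on a client set of measure $\le\varepsilon$ has strictly smaller potential. First I would record the only reformulation needed. Writing $c_i(x)=\nal|\strategy_i-x|+\al\,\load_i(\strategyProfile,\clientStrategyProfile)$ for the \emph{effective cost} of a client at $x$ at facility $i$, the fact that a single client has measure zero makes $\costsClient(\strategyProfile,\clientStrategyProfile)=c_{\choiceFunctions(\strategyProfile,x)}(x)$ and makes the cost incurred after that client's unilateral switch to $k$ equal to $c_k(x)$ evaluated with the \emph{current} loads; hence condition~2 of the equilibrium definition says exactly that for almost every $x$ the facility $\choiceFunctions(\strategyProfile,x)$ minimizes $k\mapsto c_k(x)$.

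The core step is the perturbation identity. If a measurable set $C$ of measure $\varepsilon$ is moved from facility $j$ to facility $k$, the distance part of $\potentialFunction$ changes by $\nal\int_C(|\strategy_k-x|-|\strategy_j-x|)\,dx$ and the congestion part by $\tfrac{\al}{2}[(\load_j-\varepsilon)^2-\load_j^2+(\load_k+\varepsilon)^2-\load_k^2]=\al\varepsilon(\load_k-\load_j)+\al\varepsilon^2$, so, absorbing the load terms into the integral, $\Delta\potentialFunction=\int_C(c_k(x)-c_j(x))\,dx+\al\varepsilon^2$ with $c_j,c_k$ evaluated at the \emph{original} loads. For a general perturbation $\clientStrategyProfile'$ I would split the reassigned clients into the sets $C_{jk}$ moved from $j$ to $k$, expand $(\load_i')^2-\load_i^2=2\load_i(\load_i'-\load_i)+(\load_i'-\load_i)^2$, and use the balance identity $\sum_{i\in\players}\load_i(\load_i'-\load_i)=\sum_{j\ne k}|C_{jk}|(\load_k-\load_j)$ to obtain $\Delta\potentialFunction=\sum_{j\ne k}\int_{C_{jk}}(c_k(x)-c_j(x))\,dx+\tfrac{\al}{2}\sum_{i\in\players}(\load_i'-\load_i)^2$.

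From this identity both implications are short. If $\clientStrategyProfile$ is a client equilibrium, then for a.e.\ $x\in C_{jk}$ the reformulation gives $c_j(x)\le c_k(x)$, so every integral in the first sum is nonnegative and so is the second sum; hence $\Delta\potentialFunction\ge 0$ for \emph{every} measurable mapping, i.e.\ $\clientStrategyProfile$ is actually a global minimizer, in particular a local one. Conversely, if $\clientStrategyProfile$ is not a client equilibrium, a set of positive measure of clients can each strictly improve; since $\players$ is finite I would pigeonhole over the finitely many ordered facility pairs and then restrict to where the improvement is bounded below, producing a set $B$ of positive measure $\mu$, a facility $j$ to which all of $B$ is assigned, a facility $k\ne j$, and a uniform gap $c_j(x)-c_k(x)\ge\eta>0$ on $B$. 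Moving any $C\subseteq B$ of measure $\varepsilon\le\mu$ from $j$ to $k$ then yields, by the single-pair identity, $\Delta\potentialFunction\le-\eta\varepsilon+\al\varepsilon^2$, which is negative for $\varepsilon$ small (and for all $\varepsilon\le\mu$ when $\al=0$), so $\clientStrategyProfile$ is not a local minimizer. Contraposition supplies the missing direction.

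I expect the only genuine obstacle to be this last direction: extracting a positive-measure set with a \emph{common} source and target facility and a \emph{uniform} improvement gap $\eta$, and then choosing the perturbation mass $\varepsilon$ small enough that the linear gain $-\eta\varepsilon$ dominates the quadratic loss $\al\varepsilon^2$; everything else is bookkeeping, modulo the usual ``almost every $x$'' caveat when reading condition~2 as an a.e.\ statement. As a sanity check and an alternative route I note that, regarded as a functional of fractional client assignments into the simplex over $\players$, $\potentialFunction$ is convex---a linear distance term plus a sum of squares of linear load functionals---its extreme feasible points are exactly the deterministic measurable mappings, and its first-order optimality condition is precisely the client best-response condition; but I would present the elementary perturbation argument above as the actual proof.
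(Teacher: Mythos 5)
Your proposal is correct and takes essentially the same route as the paper: both rest on the single-pair perturbation identity showing that the change in $\potentialFunction$ equals the aggregate cost change of the reassigned clients (the paper absorbs your explicit $\al\varepsilon^2$ term by charging the deviating clients the post-deviation load of their new facility). Your execution is somewhat more careful --- you make the second-order term explicit and defeat it by taking $\varepsilon$ small, you extract a positive-measure set with a uniform improvement gap rather than merely asserting that an improving $\varepsilon$-deviation exists, and your multi-pair identity yields that client equilibria are in fact global minimizers --- but these are refinements of the paper's argument rather than a different one.
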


\begin{proof}
Let $\strategyProfile$ be any fixed facility location vector. We will omit the reference to $\strategyProfile$ throughout the proof. Let $\clientStrategyProfile^*$ be any measurable client mapping for $\strategyProfile$, which locally minimizes $\Phi$. We first show that if $\clientStrategyProfile^*$ is not a client equilibrium, then there is an $\varepsilon$-deviation $\clientStrategyProfile_\varepsilon$ of $\clientStrategyProfile^*$ for which $\Phi(\clientStrategyProfile_\varepsilon) < \Phi(\clientStrategyProfile^*)$. An $\varepsilon$-deviation $\clientStrategyProfile_\varepsilon$ of $\clientStrategyProfile$ differs from $\clientStrategyProfile$ only in some interval $Z$, with $|Z| = \varepsilon > 0$, such that there exists some $i \neq j$ so that for all clients $z \in Z$ we have $\clientStrategyProfile(\strategyProfile,z) = i$ and $\clientStrategyProfile_\varepsilon(\strategyProfile,z) = j$.
Suppose that $\clientStrategyProfile^*$ is not a client equilibrium. Thus, there exists an $\varepsilon$-deviation $\clientStrategyProfile_\varepsilon$ of $\clientStrategyProfile^*$. Moreover, for each client $z \in Z$ we have $C_z(\strategyProfile, \clientStrategyProfile_\varepsilon) < C_z(\strategyProfile, \clientStrategyProfile^*)$, which yields 	$$
C_z(\strategyProfile, \clientStrategyProfile^*) - C_z(\strategyProfile, \clientStrategyProfile_\varepsilon) =  (1-\alpha)(\delta(z,i) - \delta(z,j)) + \alpha(\load_{i}(\clientStrategyProfile^*) - \load_{j}(\clientStrategyProfile_\varepsilon)) > 0.
$$
	Thus, the total cost change for all clients in $Z$ is 
	
		\begin{align*}\int_Z C_z(\clientStrategyProfile^*)\text{d}z - \int_Z C_z(\clientStrategyProfile_\varepsilon) \text{d}z = (1-\alpha)\int_Z (\delta(z,\clientStrategyProfile_i)-\delta(z,j))\text{d}z
	+ \alpha\varepsilon (\load_{i}(\clientStrategyProfile^*) - \load_j(\clientStrategyProfile_\varepsilon))
	> 0.
	\end{align*}
	
		The corresponding change in potential function value $\Phi(\clientStrategyProfile^*) - \Phi(\clientStrategyProfile_\varepsilon)$ equals
		
			\begin{align*}
		&(1-\alpha) \left(\int_0^1 \delta(x,\clientStrategyProfile^*(x))\text{d}x - \int_0^1 \delta(x,\clientStrategyProfile_\varepsilon(x))\text{d}x\right) 
		+ \alpha\left(\sum_{i=1}^n \frac{(\load_i(\clientStrategyProfile^*))^2}{2} - \sum_{i=1}^n \frac{(\load_i(\clientStrategyProfile_\varepsilon))^2}{2}\right) \\
		 = & (1-\alpha)\int_Z (\delta(z,i) - \delta(z,j))\text{d}z + \alpha\left(\frac{\load_{i}(\clientStrategyProfile^*)^2}{2} + \frac{\load_{j}(\clientStrategyProfile^*)^2}{2} - \frac{\load_{i}(\clientStrategyProfile_\varepsilon)^2}{2} - \frac{\load_{j}(\clientStrategyProfile_\varepsilon)^2}{2}\right)\\
		 = &(1-\alpha)\int_Z (\delta(z,i) - \delta(p,j))\text{d}z  + \alpha\left(\frac{\ell_{i}(\clientStrategyProfile^*)^2}{2} + \frac{(\load_{j}(\clientStrategyProfile_\varepsilon)-\varepsilon)^2}{2} - \frac{(\load_{i}(\clientStrategyProfile^*)-\varepsilon)^2}{2} - \frac{\load_{j}(\clientStrategyProfile_\varepsilon)^2}{2}\right)\\
		 = & (1-\alpha)\int_Z (\delta(z,i) - \delta(p,j))\text{d}z + \alpha\left(\frac{\ell_{i}(\clientStrategyProfile^*)^2 - (\ell_{i}(\clientStrategyProfile^*)-\varepsilon)^2 }{2} + \frac{(\ell_{j}(F_\varepsilon)-\varepsilon)^2 - \ell_{j}(\clientStrategyProfile_\varepsilon)^2}{2}\right)\\
		 = & (1-\alpha)\int_Z (\delta(z,i) - \delta(p,j))\text{d}z + \alpha\left(\frac{2\varepsilon \ell_{i}(\clientStrategyProfile^*) - 1 }{2} + \frac{-2\varepsilon \ell_{j}(\clientStrategyProfile_\varepsilon) + 1}{2}\right)\\
		= &(1-\alpha)\int_Z (\delta(z,i) - \delta(p,j))\text{d}z+ \alpha\varepsilon \left(\ell_{i}(\clientStrategyProfile^*) - \ell_{j}(\clientStrategyProfile_\varepsilon)\right)\\
		 = &\int_Z C_z(\clientStrategyProfile^*)\text{d}z - \int_Z C_z(\clientStrategyProfile_\varepsilon) \text{d}z > 0,
		\end{align*}
		where the first equation is due to the fact that only distances for clients in $Z$ and only the loads of facilities $i$ and $j$ change. Thus, $\Phi(\clientStrategyProfile_\varepsilon) < \Phi(\clientStrategyProfile^*)$. 
		Hence, we have proven that every measurable client mapping which locally minimizes $\Phi$ is a client equilibrium. 	For the other direction note that the above comparison of the change in client cost and potential function value actually proves that $\Phi$ is an exact potential function. Thus, for any client equilibrium $\clientStrategyProfile^*$ and for any $\varepsilon$-deviation $\clientStrategyProfile_\varepsilon$ of $\clientStrategyProfile^*$ it follows that $\Phi(\clientStrategyProfile^*) \leq \Phi(\clientStrategyProfile_\varepsilon)$. This yields that $\clientStrategyProfile^*$ is a local minimum of $\Phi$. 
\end{proof}
\noindent With Lemma~\ref{lem_local_pot_min_is_EQ} we can easily establish that for every facility location vector $\strategyProfile$ there exists a client equilibrium.

We analyze $\rho$-SPE for several settings. Our main goal is to show that the equilibria found by Eaton \& Lipsey~\cite{Eaton1975} for $\alpha = 0$ are also good approximate equilibria for  $\alpha \in [0,1]$ as well, i.e., $\rho$ is  small.

As shown in~\cite{Peters2018}, it holds for a ($\rho$-)SPE that for any two neighboring intervals $\mathcal{I}_i(\strategyProfile) = [\beta_{i-1},\beta_i]$, $\mathcal{I}_{i+1}(\strategyProfile) = [\beta_i,\beta_{i+1}]$  the clients at $\beta_i$ are indifferent between choosing facility $i$ or ${i+1}$ as costs are equal for both strategies: \[(1-\alpha) \cdot |\strategy_{i} - \beta_i| + \alpha \cdot \load_{i}(\strategyProfile) = (1-\alpha) \cdot |\strategy_{i+1} - \beta_i| + \alpha \cdot \load_{i+1}(\strategyProfile).\]
Taking these equations for all $n-1$ interval borders results in a system of equations, which allows us to compute the interval borders. 
In our analytical computations we make also use of the result of~\cite{Peters2018} that the best response of the both external facilities $1$ and $n$ is to locate at $\beta_1$ and $\beta_{n-1}$, respectively. Furthermore it holds that the best response $s_\player^{\text{best}}$ of a facility $\player$ is inside her corresponding interval, i.e., $s_\player^{\text{best}} \in \mathcal{I}_i(\strategyProfile, \choiceFunctions)$.
If a facility~$\player$ can improve by changing her strategy from $s_{\player}$ to another strategy $s_{\player}'$, we denote her improvement factor as $\rho_{s_i}'$ and the new interval border as $\beta_i'$. Both $\rho_{s_i}'$ and $\beta_i'$ depend on $(s_{-i},s_i')$ but we will omit the reference to $(s_{-i},s_i')$  since it will be clear from the context. 

\subsection{Three Facilities}
We start with three facilities and show that one facility at $\frac{1}{2}$ and the the other two equidistant to the left and right, respectively, with a suitably chosen gap yields a good $\rho$-SPE. 

\begin{theorem}\label{thm:3facilities_alpha}
	For $n = 3$ the game has a $\rho$-SPE with $\rho = \frac{1-\alpha^2+\sqrt{17+\alpha(16+2\alpha+\alpha^2)}}{4-2(\alpha-2)\alpha}$.
\end{theorem}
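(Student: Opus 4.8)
The plan is to exhibit the symmetric placement $\strategyProfile^{\ast}=\bigl(\tfrac12-g,\tfrac12,\tfrac12+g\bigr)$ and to fix the gap $g=g(\alpha)$ only at the end, choosing it so as to minimise the resulting approximation factor. By Lemma~\ref{lem_local_pot_min_is_EQ} the subgame always has a unique proper client equilibrium, so the argument reduces to a handful of load computations. Under $\strategyProfile^{\ast}$ left--right symmetry forces $\beta_2=1-\beta_1$, so the loads are $u_1=u_3=\beta_1$ and $u_2=1-2\beta_1$; writing the indifference equation at $\beta_1$ for a client between facility~$1$ at $\tfrac12-g$ and facility~$2$ at $\tfrac12$ and solving it gives $\beta_1=\frac{1-(1-\alpha)g}{2+\alpha}$, hence $u_1=u_3=\frac{1-(1-\alpha)g}{2+\alpha}$ and $u_2=\frac{\alpha+2(1-\alpha)g}{2+\alpha}$. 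One should also record the range of $g$ for which this is genuinely the proper equilibrium (every facility lying inside its own interval): for very small $g$ the central facility becomes so heavily loaded that this formula changes, and the optimal $g$ will turn out to lie in the admissible regime.

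By symmetry it suffices to bound the possible gain of facility~$1$ (an external facility) and of facility~$2$ (the central one). For facility~$1$ I would use the characterisation from~\cite{Peters2018} that an external facility's best response coincides with its own interval border in the deviated profile; solving the two border-indifference equations of the deviated subgame with $s_1$ set equal to $\beta_1$ yields the deviation load $u_1'=\frac{1+\alpha(1-\alpha)g}{2+2\alpha-\alpha^2}$. Before accepting this one must check that the alternative relocations are never better: jumping past facility~$2$ and co-locating with it, wedging between facilities~$2$ and~$3$, or escaping to the right of facility~$3$ (which, after reflecting the remaining two facilities, is the same computation). For facility~$2$ the profitable move is to jump over a neighbour and become external; re-solving the subgame with facility~$2$ leftmost and on its border gives $u_2'=\frac{1-2(1-\alpha)g}{2+2\alpha-\alpha^2}$, and one must again compare against co-locating with facility~$1$ (or~$3$) and keep the larger value, which for the relevant $g$ is the one from jumping over.

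The approximation factor of $\strategyProfile^{\ast}$ is $\rho(g)=\max\bigl(u_1'/u_1,\,u_2'/u_2\bigr)$. The key observation is that $u_1'/u_1$ is strictly increasing in $g$ (a wider gap makes the outer facilities more vulnerable) while $u_2'/u_2$ is strictly decreasing in $g$ (a wider gap gives the central facility more to lose by leaving), so $\rho(g)$ is minimised at the unique $g$ that equalises the two ratios. Imposing $u_1'u_2=u_2'u_1$ and eliminating $g$ reduces the equalisation condition to a quadratic equation in $\rho$, whose positive root is exactly the value in the statement; it then remains to verify that this $g$ falls inside the admissible regime for every $\alpha\in[0,1]$ and that there the co-location deviations discarded above are indeed dominated. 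The endpoints $\alpha=0$, giving $\rho=\frac{1+\sqrt{17}}{4}\approx1.2808$, and $\alpha=1$, giving $\rho=1$, are immediate sanity checks.

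The step I expect to be the main obstacle is pinning down the best deviations: establishing rigorously that among all relocations of a facility---jumping over one or two neighbours, co-locating, or wedging between two facilities---the one used above is optimal, and handling the continuum subtleties at co-location points, where the supremum of the attainable deviation load is not attained and therefore has to serve as the benchmark in the stability condition of the $\rho$-SPE. With the best deviations fixed, the remainder---three small linear systems and one quadratic---is routine algebra.
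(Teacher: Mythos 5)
Your proposal follows essentially the same route as the paper's proof: the same symmetric candidate placement (your gap $g$ is just $\tfrac12-s_1$), the same indifference-based computation of $\beta_1$, the same best-response characterizations for the outer and central facilities (your $u_1'$ and $u_2'$ match the paper's $\beta_1'$ and $s_2'$ formulas exactly), and the same equalization of the two improvement ratios to pin down the gap. Your explicit monotonicity argument for why equalizing $u_1'/u_1$ and $u_2'/u_2$ minimizes the maximum, and your flagging of the unattained supremum at co-location points, are slightly more careful than the paper's treatment, but the argument is the same.
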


\begin{proof}
	Consider $s = (s_1, \frac{1}{2}, 1-s_1)$. 
	he clients' interval splits at  $\beta_1$ and $\beta_2$ with 
	\begin{eqnarray}
	(1-\alpha)(\beta_1-s_1)+\alpha\beta_1 &=& (1-\alpha)(\frac{1}{2}-\beta_1)+\alpha(\beta_2 - \beta_1), \\
	\beta_2 & = &  1-\beta_1.
	\end{eqnarray}
	Therefore, \begin{eqnarray}
	\beta_1 &=& \frac{1+\alpha+2s_1-2\alpha s_1}{4+2\alpha}, \\
	\beta_2 & = &  \frac{3+\alpha-2s_1+2\alpha s_1}{4+2\alpha}.
	\end{eqnarray}	
Since player $F_1$ and $F_3$ are equivalent we only consider player $F_1$.
The best response of $F_1$ is to locate at $\beta'_1$. So it follows from
\begin{eqnarray}
\alpha\beta'_1 &=& (1-\alpha)(\frac{1}{2}-\beta'_1)+\alpha(\beta'_2 - \beta'_1), \\
(1-\alpha)(\beta'_2-\frac{1}{2})+\alpha(\beta'_2-\beta'_1) &=& (1-\alpha)|(1-s_1)-\beta'_2|+\alpha(1-\beta'_2),
\end{eqnarray}
that 
\begin{eqnarray}
\beta'_1 &=& \frac{2+\alpha-2\alpha s_1+\alpha^2(-1+2s_1)}{4+4\alpha-2\alpha^2}, \\
\beta'_2 & = &  \frac{3+3\alpha+2\alpha^2(-1+s_1)-2s_1}{4+4\alpha-2\alpha^2}.
\end{eqnarray}
\noindent Thus, facility $F_1$ can improve by a factor of $\rho_1 = \frac{2(2+\alpha)(2+\alpha-2\alpha s_1+\alpha^2(-1+2s_1))}{(4+4\alpha-2\alpha^2)(1+\alpha+2s_1-2\alpha s_1)}$ (as well as $F_3$, respectively).
	By our choice of~$s_1$, we will ensure that $s_1' < \frac12$ is not a best response.

	We now consider facility $F_2$. As $s$ is symmetric, we can assume, without loss of generality, that the best response of facility $2$ is a position $s_2' < \frac12$. 
	For her best response $s'_2$, we consider two cases:
	
	\begin{itemize}
		\item $s_2' \leq s_1$: In this case, the utility of facility $F_2$ is equal to the length of the first interval ending at point $\beta_1$. So, as discussed for player $F_1$, the best response is $s_2' = \beta_1'$. Hence, $s_2' = \frac{\alpha + 2s_1-2\alpha s_1}{2+2\alpha-\alpha^2}$ and facility $F_2$ can improve by $\rho_2 = \frac{(2+\alpha)(\alpha+2s_1-2\alpha s_1)}{(2+2\alpha-\alpha^2)(1+2(-1+\alpha)s_1)}$.
		\item $s_2' > s_1$: Note that $s_2' < \frac{1}{2}$ is symmetric to $1-s_2'$. So we have
		\begin{align}
		(1-\alpha)|\beta'_1-s_1|+\alpha\beta'_1 &= (1-\alpha)(\frac{1}{2}-\epsilon-\beta'_1)+\alpha(\beta'_2 - \beta'_1), \\
		\hspace*{-0.75cm}(1-\alpha)(\beta'_2-(\frac{1}{2}-\epsilon))+\alpha(\beta'_2-\beta'_1) &= (1-\alpha)((1-s_1)-\beta'_2)+\alpha(1-\beta'_2).
		\end{align}

		The utility $u_2 = \beta_2' - \beta_1' = \frac{1-2\epsilon+\alpha^2(1-2\epsilon-2s_1)-2s_1+\alpha(-3+4\epsilon+4s_1)}{(-4+\alpha)\alpha}$ becomes larger the greater $\epsilon > 0$ gets. In particular it is better for player $F_2$ to be at the same location as player $F_1$ than to be between player $F_1$ and $F_3$.
	\end{itemize}
Choosing $s_1 = \frac{-3+(\alpha-4)\alpha+\sqrt{17+\alpha(16+2\alpha+\alpha^3)}}{4(a-1)^2}$ minimizes the maximum of $\rho_1$ and $\rho_2$ and both evaluate, for $0 < \alpha < 1$, to 
\[\rho = \frac{1-\alpha^2+\sqrt{17+\alpha(16+2\alpha+\alpha^3)}}{4-2(-2+a)a}.\qedhere\]

\end{proof}
\noindent Theorem~\ref{thm:3facilities_alpha} yields directly the following statement.

\begin{corollary}\label{corll:3facilities}
	For $\alpha = 0$ and $n = 3$ the game has a $\rho$-SPE with $\rho = \frac{1}{4}(1+\sqrt{17})$.
\end{corollary}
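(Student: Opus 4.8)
The plan is to obtain the corollary by specialising Theorem~\ref{thm:3facilities_alpha} to $\alpha=0$. Substituting $\alpha=0$ into $\rho=\frac{1-\alpha^2+\sqrt{17+\alpha(16+2\alpha+\alpha^3)}}{4-2(\alpha-2)\alpha}$ collapses the numerator to $1+\sqrt{17}$ and the denominator to $4$, so $\rho=\tfrac14(1+\sqrt{17})$. To justify the substitution I would only need to check that the proof of Theorem~\ref{thm:3facilities_alpha} is non-degenerate at $\alpha=0$: every denominator occurring there (in $\beta_1,\beta_2$, in $\beta_1',\beta_2',\rho_1,\rho_2$, and in the closed form for the optimising $s_1$, whose denominator $4(\alpha-1)^2$ equals $4$) is nonzero, and neither the two-case analysis for $F_2$ nor the claim that $s_1'<\tfrac12$ is not a best response for $F_1$ relies on $\alpha>0$; the corollary then follows at once.

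It is instructive to also record the direct computation, which is short at $\alpha=0$ and makes the constants transparent. For $\alpha=0$ the clients form the ordinary nearest-facility partition, so for $s=(s_1,\tfrac12,1-s_1)$ with $0<s_1<\tfrac12$ the interval borders are the midpoints $\beta_1=\tfrac14+\tfrac{s_1}{2}$ and $\beta_2=1-\beta_1$; hence $u_1(s)=u_3(s)=\beta_1=\tfrac14+\tfrac{s_1}{2}$ and $u_2(s)=\beta_2-\beta_1=\tfrac12-s_1$. Using the best-response facts recalled before this subsection (from~\cite{Peters2018}), $F_1$'s best deviation relocates it to its new interval border $\beta_1'$, which at $\alpha=0$ is $\tfrac12$, so its interval becomes $[0,\tfrac12]$ and its utility $\tfrac12$; the centre facility $F_2$ (symmetric about $\tfrac12$) only gains by slipping just outside $F_1$, capturing $[0,s_1]$ with utility $s_1$, whereas landing strictly between two facilities leaves its interval length at $\tfrac12-s_1$ and co-location with a neighbour is dominated. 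The best multiplicative improvements are therefore $\rho_1=\dfrac{1/2}{\beta_1}=\dfrac{2}{1+2s_1}$ for $F_1$ and $F_3$, and $\rho_2=\dfrac{s_1}{\,1/2-s_1\,}=\dfrac{2s_1}{1-2s_1}$ for $F_2$.

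Since $\rho_1$ is strictly decreasing and $\rho_2$ strictly increasing in $s_1$ on $(0,\tfrac12)$, the $s_1$ that minimises $\max\{\rho_1,\rho_2\}$ is the one equating them: $\frac{2}{1+2s_1}=\frac{2s_1}{1-2s_1}$ reduces to $2s_1^2+3s_1-1=0$, giving $s_1=\frac{-3+\sqrt{17}}{4}\in(0,\tfrac12)$ and $\rho=\rho_1=\frac{2}{1+2s_1}=\frac{4}{\sqrt{17}-1}=\tfrac14(1+\sqrt{17})$, as claimed. The step I would treat most carefully is the exhaustiveness of the best-response analysis — that no position of $F_2$ and no co-location beats the two candidate deviations above — together with the fact that some of these deviating utilities are reached only in a limit, so that $s$ is a $\rho$-SPE for precisely this $\rho$; both points are already covered by the structural results of~\cite{Peters2018} that the excerpt invokes, so this does not pose a real obstacle.
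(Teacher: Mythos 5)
Your proposal is correct and matches the paper's own (one-line) proof, which simply notes that the corollary follows directly from Theorem~\ref{thm:3facilities_alpha} by setting $\alpha=0$ in the formula for $\rho$. Your additional direct computation at $\alpha=0$ (midpoint borders, $\rho_1=\tfrac{2}{1+2s_1}$, $\rho_2=\tfrac{2s_1}{1-2s_1}$, equalized at $s_1=\tfrac{-3+\sqrt{17}}{4}$) is consistent with the specialization of the theorem's proof and is a reasonable safeguard given that the theorem's derivation is phrased for $0<\alpha<1$, but it is not needed beyond that.
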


\noindent We now show that Corollary~\ref{corll:3facilities} is tight as it yields the $\rho$-SPE with minimal $\rho$ for Hotelling's original model.
\begin{theorem}\label{thm:3(b)}
	For $\alpha = 0$ and $n = 3$ the game does not have $\rho$-SPE with $\rho < \frac{1}{4}(1+\sqrt{17})$.
\end{theorem}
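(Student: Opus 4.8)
The plan is to prove the contrapositive: every $\rho$-SPE of the three-facility game at $\alpha = 0$ satisfies $\rho \geq \tfrac{1}{4}(1+\sqrt{17})$. Since exact SPE do not exist in this case (the classical $n=3$ Hotelling phenomenon), every $\rho$-SPE has $\rho > 1$, and since the whole model is invariant under the reflection $x \mapsto 1-x$ — which maps $\rho$-SPE to $\rho$-SPE and swaps the roles of $F_1$ and $F_3$ — it suffices to derive one asymmetric estimate and then reflect it. By the labelling convention we may assume $s_1 \leq s_2 \leq s_3$. If some facility lies on the boundary $\{0,1\}$, or if two or three facilities coincide, short direct estimates already push $\rho$ above $\tfrac{1}{4}(1+\sqrt{17})$ — for instance a facility at $0$ can leapfrog just inside its neighbour and roughly double its load, forcing $\rho \geq 2$; three coincident facilities force $\rho \geq \tfrac{3}{2}$ — so the essential case is $0 < s_1 < s_2 < s_3 < 1$, where the $s$-induced client equilibrium has borders $\beta_1 = \tfrac{s_1+s_2}{2}$, $\beta_2 = \tfrac{s_2+s_3}{2}$ and loads $\ell_1 = \tfrac{s_1+s_2}{2}$, $\ell_2 = \tfrac{s_3-s_1}{2}$, $\ell_3 = 1-\tfrac{s_2+s_3}{2}$.

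The key point is that every facility has an explicit ``leapfrog/squeeze'' deviation whose payoff can be read off immediately. Relocating $F_1$ to $s_2-\varepsilon$ makes $F_1$ the leftmost facility with interval $[0,s_2-\varepsilon/2]$, so letting $\varepsilon \to 0$ gives deviations of $F_1$ with utility arbitrarily close to $s_2$; hence a $\rho$-SPE must satisfy $\rho \ell_1 \geq s_2$. Symmetrically, relocating $F_3$ just right of $F_2$ gives $\rho \ell_3 \geq 1-s_2$. Relocating $F_2$ to $s_1-\varepsilon$ makes $F_2$ leftmost with interval tending to $[0,s_1]$, so $\rho \ell_2 \geq s_1$, and relocating $F_2$ just right of $F_3$ gives $\rho \ell_2 \geq 1-s_3$. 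These four inequalities are all we use.

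Substituting the load formulas turns them into linear constraints in $(s_1,s_2,s_3)$: from $\rho \ell_1 \geq s_2$ we get $s_1 \geq \tfrac{2-\rho}{\rho}s_2$, from $\rho \ell_2 \geq s_1$ we get $s_1 \leq \tfrac{\rho}{2+\rho}s_3$, and from $\rho \ell_3 \geq 1-s_2$ we get $(2-\rho)s_2 \geq 2-2\rho+\rho s_3$ (one may assume $1 < \rho < 2$, as $\rho \geq 2$ is far above the target). Chaining the first two bounds relates $s_3$ to $s_2$, and feeding that into the third collapses everything to $s_2 \leq \tfrac{\rho(\rho-1)}{2-\rho}$. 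Running the identical argument on the reflected configuration — which invokes the remaining inequality $\rho \ell_2 \geq 1-s_3$ — yields $1-s_2 \leq \tfrac{\rho(\rho-1)}{2-\rho}$. Adding the two forces $1 \leq \tfrac{2\rho(\rho-1)}{2-\rho}$, i.e. $2\rho^2-\rho-2 \geq 0$, whose positive root is exactly $\tfrac{1}{4}(1+\sqrt{17})$; hence $\rho \geq \tfrac{1}{4}(1+\sqrt{17})$. As a consistency check, at the placement of Corollary~\ref{corll:3facilities} all four inequalities hold with equality, which is precisely why it attains the bound.

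The main obstacle is not this algebra but making the deviation values rigorous under the paper's conventions: the optimal relocation of $F_1$ onto $F_2$'s flank is a supremum that is not attained, so one must argue — using uniqueness of the proper client equilibrium for each facility vector and the fact that a $\rho$-SPE has to beat \emph{every} deviation within a factor $\rho$ — that the limiting inequality $\rho \ell_1 \geq s_2$ is nevertheless forced. A secondary nuisance is writing out the boundary and coincidence cases in full, where loads come from the equal-splitting tie-break rather than the midpoint formula; the same family of leapfrog deviations still applies and the resulting estimates are even more lopsided, but a fully rigorous proof must dispatch them explicitly.
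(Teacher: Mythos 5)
Your argument is correct and uses exactly the same ingredients as the paper's proof---the same case split (all coincident / two coincident / all distinct) and the same family of leapfrog deviations $s_1'\to s_2-\varepsilon$, $s_2'\to s_1-\varepsilon$, $s_2'\to s_3+\varepsilon$, $s_3'\to s_2+\varepsilon$---but your endgame in the generic case is genuinely cleaner. The paper turns each deviation into an explicit feasibility interval for $s_1,s_2,s_3$ and shows the intervals are incompatible (a computation with several typos in the published version), whereas you chain three of the four inequalities into the single bound $s_2\le\frac{\rho(\rho-1)}{2-\rho}$, obtain the mirror bound on $1-s_2$ by reflection, and add to get $2\rho^2-\rho-2\ge 0$, whose positive root is exactly $\tfrac14(1+\sqrt{17})$; the fact that all four constraints are tight at the placement of Corollary~\ref{corll:3facilities} confirms nothing is lost. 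I checked your algebra (signs of $2-\rho$ and $2-2\rho$ included) and it is right. The one place where your ``short direct estimates'' remark undersells the work is the two-coincident case, say $s_1<s_2=s_3$: there the loads are $\ell_1=\frac{s_1+s_2}{2}$, $\ell_2=\ell_3=\frac{2-s_1-s_2}{4}$, and you need all three of $\frac{2s_2}{s_1+s_2}$, $\frac{4s_1}{2-s_1-s_2}$, $\frac{4(1-s_2)}{2-s_1-s_2}$ to be at most $\rho_0$; combining the first with the second gives $s_2\le\frac{\rho_0+2}{2(4-\rho_0)}\approx 0.603$ while combining the first with the third gives $s_2\ge 2-\rho_0\approx 0.719$, a contradiction---so the case does go through with the same deviations, but it is a genuine (if small) computation rather than an immediate estimate, and a complete write-up should include it. The limiting argument you flag ($\rho\,\ell_1\ge s_2-\varepsilon/2$ for all $\varepsilon>0$ implies $\rho\,\ell_1\ge s_2$) is indeed all that is needed to make the unattained suprema rigorous.
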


\begin{proof}
	We need to consider three cases: all facilities in the same location, two choosing the same location, and all three choosing different locations.\\
	\begin{itemize}
	\item[Case 1:] Consider all facilities choosing the same location, hence, $s = (s_1, s_1, s_1)$ and $\loadPlayer(\strategyProfile) = \frac{1}{3}$. Each player is equivalent, so we only consider facility $F_1$. Without loss of generality we can assume $s_1 \leq \frac{1}{2}$. The best response for a facility $i$ is to move to $s_i' = s_1 + \epsilon$ for some $\epsilon >0$, which results in an approximation factor $$\rho'_i = \lim_{\epsilon \rightarrow 0} \frac{(1-s_i')-\frac{\epsilon}{2}}{\loadPlayer(\strategyProfile)} \geq \frac{3}{2}.$$
	\item[Case 2:]  Consider two facilities choosing the same location, hence, $s = (s_1, s_2, s_2)$. It holds that $s_1 < \frac{1}{2} \leq s_2$, as otherwise there would be a facility $i$ with $\rho_{s_i} \geq 2$. The best response for facility  $1$ is $s_1' = s_2 - \epsilon$ some $\epsilon >0$, which leads to $$\rho'_{s_1} = \lim_{\epsilon \rightarrow 0} \frac{s_2-\frac{\epsilon}{2}}{\frac{s_1+s_2}{2}} = \frac{2s_2}{s_1+s_2}.$$ Since $\rho_1 < \frac{1}{4}(1+\sqrt{17})$ it follows
	\begin{eqnarray}
	\frac{7s_2-\sqrt{17}s_2}{1+\sqrt{17}} \leq s_1 \leq  \frac{1}{2} \label{eq_s1}  \leq  s_2 \leq \frac{-1-\sqrt{17}}{-14+2\sqrt{17}}.
	\end{eqnarray}
	
	Facility~$2$ and $3$ are equivalent. A possible strategy change for facility $2$ is either $s_2' = s_1 - \epsilon$ which results in  $$\rho'_{s_2} = \lim_{\epsilon \rightarrow 0} \frac{s_1 - \frac{\epsilon}{2}}{\frac{2-s_1-s_2}{4}} =  \frac{4s_1}{2-s_1-s_2}$$ and therefore
	\begin{eqnarray}
	\frac{7s_2-\sqrt{17}s_2}{1+\sqrt{17}} &\leq s_1 \leq & \frac{2+\sqrt{17}-s_2-\sqrt{17}s_2}{17+\sqrt{17}} \label{f8}, \\
	\frac{1}{2} &\leq s_2 \leq & \frac{-1-\sqrt{17}}{-30+2\sqrt{17}} \label{f9}
	\end{eqnarray}
	or $s_2'' = s_2 + \epsilon$ which results in $$\rho''_{s_2} = \lim_{\epsilon \rightarrow 0} \frac{1-(s_2+\frac{\epsilon}{2})}{\frac{2-s_1-s_2}{4}} =  \frac{4(1-s_2)}{2-s_1-s_2}.$$ However, $\rho''_{s_2} < \frac{1}{4}(1+\sqrt{17})$ contradicts \eqref{f8} and \eqref{f9}.So there is no valid choice of $s_1$ and $s_2$ such that $\rho'_{s_1}$, $\rho'_{s_2}$ and $\rho''_{s_2}$ are smaller than $\frac{1}{4}(1+\sqrt{17})$.\\
	\item[Case 3:] 
	Consider all facilities choosing different locations, hence, $s = (s_1, s_2, s_3)$. It holds that $s_1 < \frac{1}{2} \leq s_2 < s_3$, since otherwise $\rho_{s_i} \geq 2$. Like in the previous case the best response for facility $1$ is $s_1' = s_2 - \epsilon$ which leads to $\rho'_{s_1} = \frac{2s_2}{s_1+s_2}$. 
	A possible strategy change for facility $2$ is $s_2' = s_1 - \epsilon$ with  $$\rho'_{s_2} = \lim_{\epsilon \rightarrow 0} \frac{s_1 - \frac{\epsilon}{2}}{\frac{s_3-s_1}{2}} = \frac{2s_1}{s_3-s_1}$$ and therefore 
		\begin{eqnarray}
	 s_1 \leq  \frac{s_3+\sqrt{17}s_3}{9+\sqrt{17}}.
	\end{eqnarray}
	or $s_2'' = s_3 + \epsilon$ which leads to $$\rho''_{s_2} = \lim_{\epsilon \rightarrow 0} = \frac{1-(s_3-\frac{\epsilon}{2})}{\frac{s_3-s_1}{2}} = \frac{2s_1+s_3}{s_3-s_1}.$$ Hence, it has to hold 
	\begin{eqnarray}
	& s_1 \leq  \frac{-8+9s_3+\sqrt{17}s_3}{1+\sqrt{17}} &\text{ and } \frac{8}{9+\sqrt{17}} < s_3 \leq \frac{9+\sqrt{17}}{10+2\sqrt{17}}   \label{eq_f6}, \\
	\text{or } & s_1 \leq  \frac{s_3+\sqrt{17}s_3}{9+\sqrt{17}}  &\text{ and }  \frac{9+\sqrt{17}}{10+2\sqrt{17}} < s_3. \label{eq_f7}
	\end{eqnarray}
	
	Facility $3$ has the possibility to move to $s_3' = s_2 + \epsilon$ with $$\rho'_{s_3} = \lim_{\epsilon \rightarrow 0} \frac{1-(s_2+\frac{\epsilon}{2})}{\frac{2-s_2-s_3}{2}} = \frac{2(1-s_2)}{2-s_2-s_3},$$ so 
	\begin{eqnarray}
	& s_3 \leq  \frac{-5+3\sqrt{17}}{2+2\sqrt{17}}   \label{eq_f8}, \\
	\text{or } & s_2 \geq  \frac{-6+2\sqrt{17}-s_3-\sqrt{17}s_3}{-7+\sqrt{17}} \text{ and }  s_3 >  \frac{-5+3\sqrt{17}}{2+2\sqrt{17}} . \label{eq_f9} 
	\end{eqnarray}
	or $s_3'' =  s_1 - \epsilon$ with $$\rho''_{s_3} = \lim_{\epsilon \rightarrow 0} \frac{s_1 -\frac{\epsilon}{2}}{\frac{2-s_2-s_3}{2}} = \frac{2s_1}{2-s_2-s_3}.$$ However, $\rho''_{s_3} < \frac{1}{4}(1+\sqrt{17})$ contradicts \eqref{eq_f8} and \eqref{eq_f9}. So there is no valid solution with $\rho'_{s_1}$, $\rho'_{s_2}, \rho''_{s_2}$, $\rho'_{s_3}$ and $\rho''_{s_3}$ smaller than $\frac{1}{4}(1+\sqrt{17})$. \qedhere
	\end{itemize}
\end{proof}

\subsection{Uniformly Distributed Facilities}
As a warm-up, we consider the uniform distribution $\strategyProfile_\text{opt}$ of all facilities on the line, which is defined as $\strategyProfile_\text{opt} = (\strategy_1, \ldots, \strategy_n)$ with $\strategy_i = \frac{2i-1}{2n}$ for $i \in \{1, \ldots, n\}$. See Figure~\ref{fig:opt_positions} for an illustration. 
Note, that this facility placement minimizes the average client cost.

\begin{figure}[ht]
	\centering
	\includegraphics[width=13cm]{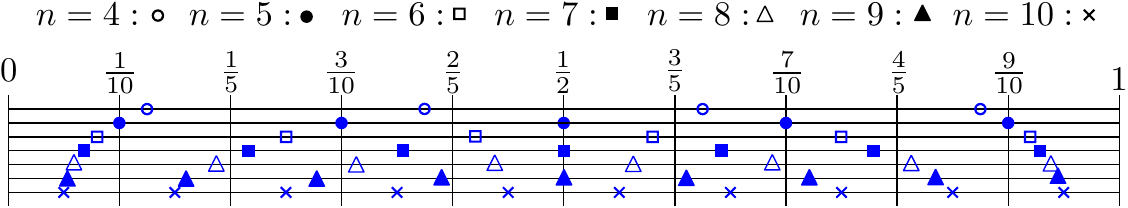}
	\caption{Facility positions in $\strategyProfile_\text{opt}$ for $4\leq n \leq 10$.}
	\label{fig:opt_positions}
\end{figure}

\noindent For a small number of players, i.e., $4\le n \le 10$, we determine $\rho$ explicitly as a function of~$\alpha$. 

\begin{theorem} The locations $\strategyProfile_\text{opt}$ yields  a $\rho_n$-SPE in the game with $n$ facilities with the following values of $\rho_n$.\\
	{ $\rho_4 = \frac{1}{2} + \frac{2(\alpha^2-2)}{(\alpha-1)\alpha(4+\alpha)-4}$, \\ $\rho_5 = \frac{12+\alpha(4+\alpha(\alpha(\alpha-2)-10))}{8+\alpha(2+\alpha)(4+(\alpha-6)\alpha)}$,\\
		$\rho_6 = \frac{1}{2} + \frac{16-16\alpha^2+3\alpha^4}{16+\alpha(16+\alpha(\alpha(\alpha(5+\alpha)-12)-20))}$,\\
		$\rho_7 = \frac{\alpha(\alpha(64+\alpha(16+\alpha(\alpha(\alpha-3)-21)))-16)-48}{\alpha(\alpha(48+\alpha(32+\alpha((\alpha-6)\alpha-18))))-32)-32}$,\\
		$\rho_8 = \frac{1}{2} + \frac{4(\alpha^2-2)(8-8\alpha^2+\alpha^4)}{a((\alpha-2)\alpha(2+\alpha)(\alpha(\alpha(7+\alpha)-20)-28)-64)-64}$,\\
		$\rho_9 = \frac{192+\alpha(64+\alpha(\alpha(4+\alpha)(\alpha(56+\alpha((\alpha-8)\alpha-4))-24)-352))}{128+(\alpha-2)\alpha(\alpha(2+\alpha)(48+\alpha(48+\alpha((\alpha-8)\alpha-28)))-64)}$,\\
		$\rho_{10} = \frac{1}{2}+ \frac{256-512\alpha^2+336\alpha^4-80\alpha^6+5\alpha^8}{256+\alpha(256+\alpha(\alpha(\alpha(432+\alpha(240+\alpha(\alpha(9\alpha+\alpha^2-40)-120))))-448)-576))}$}.
\end{theorem}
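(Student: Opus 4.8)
The plan is to begin by identifying the $\strategyProfile_\text{opt}$-induced client equilibrium. Because the facilities sit equidistantly, the midpoint between $\strategy_i=\frac{2i-1}{2n}$ and $\strategy_{i+1}$ is $\frac{i}{n}$, so guessing $\beta_i=\frac{i}{n}$ makes every facility equidistant from both of its boundaries and gives it load exactly $\frac1n$; substituting into the indifference equations $(1-\alpha)|\strategy_i-\beta_i|+\alpha\loadPlayer=(1-\alpha)|\strategy_{i+1}-\beta_i|+\alpha\load_{i+1}$ confirms this is the unique proper client equilibrium. Hence $\utilityPlayer(\strategyProfile_\text{opt})=\frac1n$ for every $i$, and being a $\rho_n$-SPE just says that no facility can multiply her utility by more than $\rho_n$ via a unilateral deviation; thus it suffices to compute, for each $i$, the utility $\utilityPlayer((\strategyProfile_{-i},\strategy_i^{\text{best}}))$ of the best response, and then set $\rho_n=n\cdot\max_i\utilityPlayer((\strategyProfile_{-i},\strategy_i^{\text{best}}))$.

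Next, since $\strategyProfile_\text{opt}$ is symmetric about $\frac12$, facilities $i$ and $n+1-i$ behave identically, so only $i\le\lceil n/2\rceil$ must be treated. For each such $i$ I would compute the best response by the following recipe, using the two facts imported from~\cite{Peters2018}: the best response of an external facility ($i=1$ or $i=n$) is to locate exactly on its new interval border, and in general the best response lies strictly inside the deviating facility's new interval. Fix a candidate deviation position $x=\strategy_i'$ and write down the $n-1$ indifference equations of the post-deviation proper client equilibrium, after resolving, for each border $\beta_j'$, whether $\strategy_j$ lies to the left or to the right of $\beta_j'$ (this splits into finitely many geometric cases according to which old interval $x$ lands in, whether $x$ jumps over a neighbour, and so on). Solving this linear system yields the new borders $\beta_j'(x)$ as rational functions of $x$ and $\alpha$; the deviating facility's utility is $\beta_i'(x)-\beta_{i-1}'(x)$ (respectively $\beta_1'(x)$ or $1-\beta_{n-1}'(x)$ at the ends), which I then maximize over the admissible range of $x$. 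As in the three-facility proof, a deviation that would leave facility $i$ strictly between two other facilities can be shown to be dominated by sliding it onto the nearer neighbour, which prunes the case list; what remains is to take $\rho_i:=n\cdot\utilityPlayer((\strategyProfile_{-i},\strategy_i^{\text{best}}))$, then $\rho_n=\max_{i\le\lceil n/2\rceil}\rho_i$, and to recognise which facility attains this maximum for each $n\in\{4,\dots,10\}$.

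The main obstacle is purely the symbolic bookkeeping: for $n$ up to $10$ the linear systems carry up to nine unknowns, their solutions are bulky rational functions of $\alpha$ and $x$, and maximizing over $x$ means solving $\partial\utilityPlayer/\partial x=0$, whose solution is itself an algebraic function of $\alpha$; back-substituting and simplifying is exactly what produces the nested-polynomial closed forms in the statement, and I would carry out these manipulations with a computer algebra system. The genuinely non-routine part is the accompanying case analysis: one must verify that the sign pattern of the absolute values assumed when solving the system is consistent with where $\strategy_i^{\text{best}}$ actually lands, for every $\alpha\in[0,1]$, and that the identity of the maximizing facility — hence the formula for $\rho_n$ — does not change as $\alpha$ sweeps $[0,1]$. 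For the non-critical facilities it is enough to show their best-response ratios never exceed $\rho_n$, which again reduces to comparing rational functions on the interval $[0,1]$.
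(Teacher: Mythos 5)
Your proposal follows essentially the same route as the paper: compute the $\strategyProfile_\text{opt}$-induced client equilibrium with borders $\beta_i=\frac{i}{n}$ and loads $\frac1n$, enumerate each facility's possible deviations by which subinterval $[s_k,s_{k+1}]$ the new location lands in, solve the resulting linear indifference systems (symbolically, with the imported facts from Peters et al.\ about best responses lying at/inside the new interval borders), and take the maximum improvement ratio, which for all $n\le 10$ is attained by the outermost facilities moving to $\beta_1'$ (resp.\ $\beta_{n-1}'$). The case bookkeeping and symbolic maximization you describe is exactly the content of the paper's argument.
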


\begin{proof}
We compute the interval borders $\beta_1, \ldots, \beta_{n-1}$ for the proper client equilibrium by solving the system with $n-1$ equations\begin{eqnarray}
(1-\alpha)|s_i-\beta_i| + \alpha \load_i(s) = (1-\alpha)|s_{i+1}-\beta_i| + \alpha \load_{i+1}(s)
\end{eqnarray}
for $i \in \{1, \ldots, n-1\}$. A facility $F_i$ obtains a load of $\load_i(\strategyProfile_\text{opt}) = \frac1n$ in the strategy vector~$\strategyProfile_\text{opt}$. Therefore the strategy changes $s_i' < s_1$ and $s_i' > s_n$ is not an improvement for an arbitrary facility $i$ since $\beta_1 > s_1$ and $\beta_{n-1} < s_n$. So the best response for a facility $i$ is to locate inside the interval $[s_1,s_n]$. To compute the best response of a facility $i$ we have to check all possible strategy changes. So $i$ can be located in each of the subintervals$[s_1,s_2],[s_2,s_3], \ldots, [s_{n-1}, s_n]$. Solving \begin{eqnarray*}
(1-\alpha)(\beta_1' - s_1')| + \alpha \beta_1' &=& (1-\alpha)|s_{2}-\beta_1'| + \alpha (\beta_2' - \beta_1'), \\
(1-\alpha)(\beta_2' - s_2)| + \alpha (\beta_2'-\beta_1') &=& (1-\alpha)|s_{3}-\beta_2'| + \alpha (\beta_3' - \beta_2'), \\
\ldots \\
(1-\alpha)(\beta_{n-1}' - s_{n-1})| + \alpha (\beta_{n-1}'-\beta_{n-2}') &=& (1-\alpha)|s_{n}-\beta_{n-1}'| + \alpha (1 - \beta_{n-1}'), 
\end{eqnarray*}
yields the new interval borders $\beta_i'$ for $1 \leq i \leq n-1$ when facility $1$ changes her strategy to $s_1' \in [s_1, s_2]$. Together with the result that the best response of facility $1$ is to locate at her interval border $\beta_1'$, we can easily calculate the approximation factor $\rho$ for this case. 

To check how good the other strategy changes are, we have to construct a modified system of equations, where we respect that the considered facility $i$ is not anymore in the consecutive order $s_1 \leq s_2 \leq \ldots \leq s_n$ in $[0,1]$. By setting up a suitable system of equations for each case $s_i' \in [s_k,s_{k+1}]$ for $k \in \{1, \ldots, n-1\}$ we address the problem.

So we can verify for all facilities $i$ for $i \in \{1, \ldots, n\}$ all possible strategy changes with the help of the system of equations. It turns out that for all $n \leq 10$ the facilities $1$ and $n$ have the highest possible improvement by moving to the new interval border $\beta_1'$ and $\beta_{n-1}'$, respectively.
\end{proof}

\noindent Based on the results of the previous section and our agent-based simulations (cf.\ Section~\ref{subsec:empirical_support_conjecture}) we derive the following conjecture for an arbitrary number of facilities.

\begin{conjecture}\label{conj:optimum_highest_improvement}
	Given a game with $n > 3$ facilities and the state $\strategyProfile_\text{opt} = (\strategy_1, \ldots, \strategy_n)$ with $\strategy_i = \frac{2i-1}{2n}$ for all $i \in \{1, \ldots, n\}$. Then one of
	the outmost facilities, $1$ or $n$, has the highest possible improvement factor by changing her strategy towards the middle to the new interval border $s_1' = \beta'_1$ or $s_n' = \beta'_{n-1}$.
\end{conjecture}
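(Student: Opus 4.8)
The plan is to turn the statement into a finite comparison of the loads that each facility can reach by its best deviation from $\strategyProfile_\text{opt}$, and then to control this comparison uniformly in $n$ using the rigid linear structure of the client equilibrium. Since every facility has load $\tfrac1n$ in $\strategyProfile_\text{opt}$, the improvement factor of facility $i$ after a deviation to $s_i'$ is exactly $n\cdot\loadPlayer((\strategyProfile_{-i},s_i'))$, so the conjecture is equivalent to saying that, over all $i$ and all $s_i'$, the new load is maximised by $i\in\{1,n\}$ moving to $\beta_1'$ (resp.\ $\beta_{n-1}'$). Using the cited facts that a best response of a facility lies in $[s_1,s_n]$ and inside its own (new) client interval, it suffices to consider, for each ordered pair $(i,k)$, the deviation in which facility $i$ vacates slot $i$ and is re-inserted at its locally optimal position inside $[s_k,s_{k+1}]$; this leaves $O(n^2)$ candidate loads to compare, which for $n\le 10$ is exactly the computation behind the preceding theorem.

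First I would record the recurrence governing the proper client equilibrium: whenever every border lies strictly between its two adjacent facilities, the indifference equations of Section~\ref{sec:analytical} collapse to
\[ 2\beta_j \;=\; (1-\alpha)\bigl(s_j+s_{j+1}\bigr)\;+\;\alpha\bigl(\beta_{j-1}+\beta_{j+1}\bigr),\qquad \beta_0=0,\ \beta_n=1, \]
with a boundary modification at a border where an external facility sits at the edge of its interval. For $\strategyProfile_\text{opt}$ one has $s_j+s_{j+1}=\tfrac{2j}{n}$, so $\beta_j=\tfrac{j}{n}$ is the solution, recovering $\loadPlayer=\tfrac1n$. The homogeneous recurrence $2\delta_j=\alpha(\delta_{j-1}+\delta_{j+1})$ has fundamental solutions $r_{\pm}^{j}$ with $r_{\pm}=\tfrac{1\pm\sqrt{1-\alpha^2}}{\alpha}$ and $0<r_-<1<r_+$, hence a change of the forcing term at a single index perturbs the borders only locally, with geometric decay of ratio $r_-$ on each side. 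A deviation of facility $i$ into $[s_k,s_{k+1}]$ changes the forcing terms $(1-\alpha)(s_j+s_{j+1})$ at only two places — the \emph{closed gap} at slot $i$ and the \emph{insertion} near slot $k$ — so the new borders $\beta_j'$ equal $\tfrac{j}{n}$ plus a superposition of two geometrically decaying bumps that can be written in closed form, and the new load of the deviating facility is the gap between the two borders flanking it.

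With these closed forms the comparison reduces to two monotonicity claims. (a) For a fixed deviating facility the obtainable load is monotone in the hop length: re-inserting past more than one facility leaves the deviator squeezed between two facilities that are only $\tfrac1n$ apart (up to exponentially small corrections), so only a single-facility hop or co-location with an immediate neighbour can be optimal; this follows by bounding the two decaying bumps against the $\tfrac1n$ gap of the ambient grid. (b) Among these short moves the external facilities strictly win: when facility $1$ relocates to $\beta_1'$ it keeps the whole boundary segment $[0,\beta_1']$ and one computes $\beta_1'>\tfrac1n$ strictly for $\alpha<1$, whereas an internal facility that hops a neighbour only trades its two-sided $\tfrac1n$-interval for a position again hemmed in on both sides and so ends up with at most $\tfrac1n$ plus a decaying correction that is dominated by the gain $\beta_1'-\tfrac1n$. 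Carrying out (a) and (b) amounts to checking a handful of sign inequalities in the closed-form expressions, parametrised by $\alpha\in[0,1]$ and $n$, with the case $\alpha=1$ being trivial since then every configuration is an exact SPE.

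The main obstacle is the uniformity in $n$, precisely because the optimal deviation has the closed gap and the insertion adjacent — the two decaying bumps overlap near slot $1$ — so one cannot simply pass to the $n\to\infty$ limit of each bump separately and must instead keep the exact finite-$n$ solution of the tridiagonal border system there. I would resolve this either by induction on $n$, expressing the $(n{+}1)$-facility value of $\beta_1'$ in terms of the $n$-facility one via a single extra step of the recurrence together with the bound $0<r_-<1$, or by solving that tridiagonal system explicitly through Chebyshev-type polynomials in $\tfrac1\alpha$ and verifying the resulting inequalities are monotone in $n$; the base cases $n\le10$ are covered by the preceding theorem and the simulations of Section~\ref{subsec:empirical_support_conjecture} indicate the inequalities only loosen as $n$ grows, so the remaining work is to convert that empirical monotonicity into a clean analytic estimate on $r_-$ and the correction terms.
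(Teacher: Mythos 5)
First, a point of context: the statement you are proving is stated in the paper as a \emph{conjecture}, not a theorem. The authors do not prove it; they support it only by the explicit computations for $4\le n\le 10$ in the preceding theorem and by agent-based simulations (Section~\ref{subsec:empirical_support_conjecture}), and they describe an analytical proof as ``very challenging.'' So there is no paper proof to compare against, and the relevant question is whether your argument closes the gap the authors left open. It does not, and you essentially concede this yourself: the final paragraph states that ``the remaining work is to convert that empirical monotonicity into a clean analytic estimate,'' which is precisely the content of the conjecture. What you have is a plausible and well-organized attack plan, with the correct reduction (improvement factor $=n\cdot\loadPlayer$ of the deviator), the correct tridiagonal recurrence $2\beta_j=(1-\alpha)(s_j+s_{j+1})+\alpha(\beta_{j-1}+\beta_{j+1})$, and the correct characteristic roots $r_\pm=\frac{1\pm\sqrt{1-\alpha^2}}{\alpha}$ --- but the two load comparisons (a) and (b) on which everything rests are asserted via phrases like ``up to exponentially small corrections'' and ``dominated by the gain'' rather than proved.

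Two concrete obstacles in your plan deserve to be named. First, the geometric-decay mechanism with ratio $r_-$ degenerates exactly where it is most needed: as $\alpha\to 1$ one has $r_-\to 1$, so the ``bumps'' from the closed gap and the insertion do not decay at all, and your bounds in (a) and (b) cannot be uniform in $\alpha$ on $[0,1)$ without a separate quantitative treatment of the regime $\alpha$ near $1$ (dismissing only the single point $\alpha=1$ as trivial is not enough; indeed the simulations locate the worst case around $\alpha=0.55$, and for $\strategyProfile_\text{opt}$ the paper's own Figure~\ref{fig:facilityApproxFactorsDistributedFacilities} shows $\rho$ still well above $1$ for moderate $\alpha$). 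Second, your superposition picture assumes the perturbed system remains the same linear system with modified forcing terms, but after a deviation the sign pattern of $s_j-\beta_j'$ can change (the paper explicitly warns that one must ``construct a modified system of equations'' respecting that the deviator is no longer in consecutive order); without verifying that no border crosses a facility, the closed forms you invoke for $\beta_j'$ are not justified. Until claims (a) and (b) are established with explicit inequalities valid for all $n>10$ and all $\alpha\in[0,1]$, the argument remains a programme rather than a proof --- a reasonable one, but the conjecture stays open.
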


\noindent Using generalized continued fractions, define
\begin{equation*}
\contFraQua^m := \contFracOpe_{j=1}^{m}{\frac{-\alpha^2/4}{1}} =  \cfrac{-\alpha^2/4}{1 +\cfrac{-\alpha^2/4}{1 + \ddots \cfrac{-\alpha^2/4} {1 }} } \text{\quad and}
\end{equation*}

\begin{align*}
\approxOptimumGeneral = 
\frac{n}{1+\frac{2}{1+\alpha} \contFraQua^{n-2}}
 \Bigg(
\frac{1-\alpha}{1+\alpha}\frac{3}{2n}
+\sum_{k=2}^{n-1}{\frac{1-\alpha}{1+\alpha} \frac{2k}{n} \prod_{j=n-k}^{n-2}{\left(-\frac{2}{\alpha} \contFraQua^{j}\right)} }
+\frac{\alpha}{1+\alpha}\prod_{j=1}^{n-2}{\left(-\frac{2}{\alpha} \contFraQua^{j}\right)}
\Bigg)
.\end{align*}

\noindent Using Conjecture~\ref{conj:optimum_highest_improvement} and the definition of $\approxOptimumGeneral$ we can state the following approximation guarantee for arbitrary $n$.

\begin{theorem}\label{theorem:optimum_arbitrary_n}
	Assume Conjecture~\ref{conj:optimum_highest_improvement} holds for $n>3$ facilities. Then the game has a $\rho$-SPE with $\rho = \approxOptimumGeneral$.
\end{theorem}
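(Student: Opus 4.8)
The plan is to use Conjecture~\ref{conj:optimum_highest_improvement} to reduce the theorem to a single explicit computation. By the conjecture, among all facilities and all deviations available at $\strategyProfile_\text{opt}$ (equipped with its induced proper client equilibrium, which exists by Lemma~\ref{lem_local_pot_min_is_EQ}), the largest improvement factor is achieved by an outermost facility; by the symmetry of $\strategyProfile_\text{opt}$ we may take it to be $F_1$, which relocates to the border $\strategy_1' = \beta_1'$ of its own new interval -- recall from~\cite{Peters2018} that the interval border is the best response of an external facility. Provided the resulting $\beta_1'$ satisfies $\beta_1' \le \strategy_2$, which I would verify a posteriori, $F_1$ remains leftmost and serves exactly $[0,\beta_1']$, so its post-deviation utility is $\beta_1'$ while $\load_1(\strategyProfile_\text{opt}) = \tfrac1n$. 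Hence the maximal improvement factor, i.e.\ the claimed value of $\rho$, equals $n\beta_1'$, and it suffices to show $n\beta_1' = \approxOptimumGeneral$.

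To obtain $\beta_1'$ I would write the indifference system for the proper client equilibrium after $F_1$ moves to $\beta_1'$. With $F_i$ at $\strategy_i = \frac{2i-1}{2n}$ for $i\ge 2$ and using $\strategy_i + \strategy_{i+1} = \frac{2i}{n}$, the indifference conditions between neighbouring intervals collapse to
\[
(1+\al)\,\beta_1' = \nal\,\tfrac{3}{2n} + \al\,\beta_2',
\qquad
2\beta_i' = \nal\,\tfrac{2i}{n} + \al\,(\beta_{i-1}' + \beta_{i+1}') \ \text{ for } 2\le i\le n-1,
\]
with $\beta_0' = 0$ and $\beta_n' = 1$; this is a tridiagonal linear system in $\beta_1',\dots,\beta_{n-1}'$. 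I would solve it by eliminating unknowns starting from the $\beta_n'$ end, writing $\beta_i' = \tilde p_i + \tilde q_i\,\beta_{i-1}'$ for $i = n-1,n-2,\dots,2$. Substituting one equation at a time yields the recursions $\tilde q_i = \frac{\al}{2 - \al\tilde q_{i+1}}$ and $\tilde p_i = \tilde q_i\big(\frac{2i\,\nal}{\al n} + \tilde p_{i+1}\big)$ with $\tilde q_{n-1} = \frac{\al}{2}$ and $\tilde p_{n-1} = \frac{(n-1)\nal}{n} + \frac{\al}{2}$ (equivalently, set $\tilde q_n = 0$, $\tilde p_n = 1$ and run the same recursion from $i=n-1$). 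The recursion for $\tilde q_i$ is exactly the defining recursion of the generalized continued fraction $\contFraQua$, so an induction gives $\tilde q_i = -\tfrac{2}{\al}\,\contFraQua^{n-i}$ for $2\le i\le n-1$. Unfolding the $\tilde p$-recursion down to $i=2$ produces $\tilde p_2 = \sum_{i=2}^{n-1}\big(\prod_{\ell=2}^{i}\tilde q_\ell\big)\frac{2i\,\nal}{\al n} + \prod_{\ell=2}^{n-1}\tilde q_\ell$, and re-indexing each product via $j = n-\ell$ rewrites $\prod_{\ell=2}^{i}\tilde q_\ell = \prod_{j=n-i}^{n-2}\!\big(-\tfrac{2}{\al}\contFraQua^{j}\big)$ and $\prod_{\ell=2}^{n-1}\tilde q_\ell = \prod_{j=1}^{n-2}\!\big(-\tfrac{2}{\al}\contFraQua^{j}\big)$.

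Finally I would feed $\beta_2' = \tilde p_2 + \tilde q_2\,\beta_1'$ into the first equation: since $\al\tilde q_2 = -2\,\contFraQua^{n-2}$, the coefficient of $\beta_1'$ becomes $(1+\al)\big(1 + \tfrac{2}{1+\al}\contFraQua^{n-2}\big)$, hence $\beta_1' = \big(1 + \tfrac{2}{1+\al}\contFraQua^{n-2}\big)^{-1}\big(\tfrac{\nal}{1+\al}\tfrac{3}{2n} + \tfrac{\al}{1+\al}\tilde p_2\big)$. Substituting the unfolded $\tilde p_2$ and multiplying by $n$ reproduces $\approxOptimumGeneral$ term by term, the three summands there being, respectively, the $\beta_2'$-contribution of the first equation, the interior forcing terms $\frac{2i}{n}$, and the boundary value $\beta_n' = 1$. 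A short sign bound -- an induction shows $-\tfrac12 < \contFraQua^{m} \le 0$ for every $m\ge 1$ and $\al\in[0,1]$ -- guarantees $1 + \tfrac{2}{1+\al}\contFraQua^{n-2} > 0$, so $\rho = n\beta_1' = \approxOptimumGeneral$ is well defined and positive, and $\strategyProfile_\text{opt}$ with its induced proper client equilibrium is a $\rho$-SPE for this $\rho$.

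I expect the main obstacle to be the bookkeeping in the elimination step: establishing the identity $\tilde q_i = -\tfrac{2}{\al}\contFraQua^{n-i}$ cleanly, and -- more delicately -- unfolding the intertwined $\tilde p$-recursion and matching the nested continued-fraction products (including the term coming from the boundary value $\beta_n' = 1$, which becomes $\frac{\al}{1+\al}\prod_{j=1}^{n-2}\!\big(-\tfrac{2}{\al}\contFraQua^{j}\big)$) against the precise index ranges appearing in $\approxOptimumGeneral$. Everything else is the symmetry and best-response input already available from Section~\ref{sec:analytical} plus routine linear algebra; one should also check at the end that the computed $\beta_i'$ are indeed increasing in $[0,1]$, so that the post-deviation client mapping is a genuine proper client equilibrium.
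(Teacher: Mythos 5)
Your proposal is correct and follows essentially the same route as the paper: reduce via Conjecture~\ref{conj:optimum_highest_improvement} to facility $1$ relocating to $\beta_1'$, set up the tridiagonal indifference system for the new client equilibrium, eliminate from the $\beta_n'$ end using the continued-fraction recursion, and conclude $\rho = n\beta_1' = \approxOptimumGeneral$. Your write-up is in fact more explicit than the paper's (which defers the elimination details to the proof of Theorem~\ref{theorem:arbitrary_n_double}), and the extra care about $\beta_1'\le \strategy_2$ and the sign of $1+\tfrac{2}{1+\alpha}\contFraQua^{n-2}$ is a welcome addition rather than a deviation.
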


\begin{proof}
	Consider the state $\strategyProfile_\text{opt} = (\strategy_1, \ldots, \strategy_n)$ with $\strategy_i = \frac{2i-1}{2n}$ for $ i \in \{1, \ldots, n\}$.
	The clients' intervals split at $\beta_i= \frac{i}{n}$ for $ i \in \{1, \ldots, n-1\}$, so each facility $i$ has a utility of $\utilityPlayer(\strategyProfile) = \frac{1}{n}$.
	Using Conjecture~\ref{conj:optimum_highest_improvement}, we only need to consider facility $1$ with a move to her new interval border $\beta'_1$, formally $\strategyProfile'= (\strategyProfile_{-1}, \beta'_1)$ and we formalize the new state with a system of linear equations.
	
	\begin{eqnarray*}
	\nal(\beta'_1-\strategy'_1)\ +\ \al(\beta'_1-\beta'_0) & =& \nal(\strategy_2-\beta'_1)\ +\ \al(\beta'_2-\beta'_1),\\
	\nal(\beta'_i-\strategy_i)\ +\ \al(\beta'_i-\beta'_{i-1}) & =& \nal(\strategy_{i+1}-\beta'_i)\ +\ \al(\beta'_{i+1}-\beta'_i)\\
		& & \forall i \in \{2, \ldots , n-1 \}.
	\end{eqnarray*}
	We solve this system for $\beta'_1$ using Gaussian elimination and generalized continued fractions.
	Since we consider the first facility, we have $\utility_1(\strategyProfile') = \beta'_1$. The derivation is similar to the proof of Theorem~\ref{theorem:arbitrary_n_double}.
	Together with $\utility_1(\strategyProfile)=\frac{1}{n}$ and Conjecture~\ref{conj:optimum_highest_improvement} we get $\rho = \frac{\utility_1(\strategyProfile')}{\utility_1(\strategyProfile)} = n\beta'_1$
	which equals $\approxOptimumGeneral$ by definition.
\end{proof}

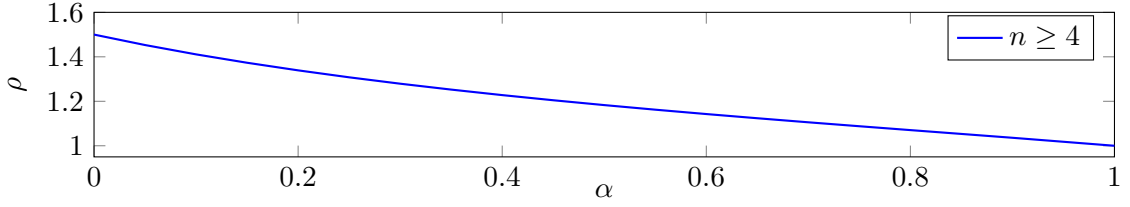
\begin{figure}[htb]
	\centering
	\begin{tikzpicture}
	\begin{axis}[xmin=0, xmax=1, ymin=0.95, ymax=1.6, samples at={0,0.05,...,0.95,1}, width=\columnwidth, height=3.5cm, xlabel={$\alpha$}, ylabel={$\rho$}, ylabel near ticks, xtick={0,0.2,0.4,0.6,0.8,1},  x label style={ below=-3mm}]
	\addplot[blue, thick] {(x^3+7*x^2-4*x-12)/(2*(x^3+3*x^2-4*x-4))};
	\legend{$n \geq 4$}
	\end{axis}
	\end{tikzpicture}
	\vspace*{-1em} 
	\caption{Approximation factor $\rho$ for $\strategyProfile_\text{opt}$ as a function of $\alpha$.}
	\label{fig:facilityApproxFactorsDistributedFacilities}
\end{figure}

\noindent The influence of the number of facilities $n$ is negligible in $\approxOptimumGeneral$, so Figure~\ref{fig:facilityApproxFactorsDistributedFacilities} shows the approximation factor as a function of $\alpha$.
For large values of $\alpha$ the factor is close to $1$, which is to be expected as the actual location of the facilities are less important. However for the remaining range of $\alpha$, facilities can improve significantly.

\subsection{Co-locating Facilities}
We study a facility placement $\strategyProfile_\text{pair}$ which
was proposed by Eaton~\& Lipsey~\cite{Eaton1975} and respects the principle of minimum differentiation since it  consists of co-located pairs of facilities. 
We show for  $n \leq 10$ that the placements $\strategyProfile_\text{pair}$ yield $\rho$-SPE for all $\alpha$ with surprisingly small values of $\rho$. 
For an even number of players $n = 2k$ the placement 
is $\strategyProfile_\text{pair} = (s_1, \ldots, s_n)$ and for an odd number of players $n = 2k-1$ the placement is $\strategyProfile_\text{pair} = (s_1, \ldots,s_{k-1}, s_k, s_{k+2}, \ldots, s_{n+1})$ with $s_{2 i -1} = s_{2i } = \frac{2i -1}{2k}$ for $i \in \{ 1, \ldots, k\}$ for some $k \in \mathbb{N}$ (see Figure~\ref{fig:eaton_lipsey_positions}). Eaton \& Lipsey~\cite{Eaton1975} proved that $\strategyProfile_\text{pair}$ is a SPE for $\alpha = 0$. Moreover, it trivially is also a SPE for $\alpha = 1$ since any facility placement is a SPE for $\alpha=1$.
\begin{figure}[ht]
	\centering
	\includegraphics[width=13cm]{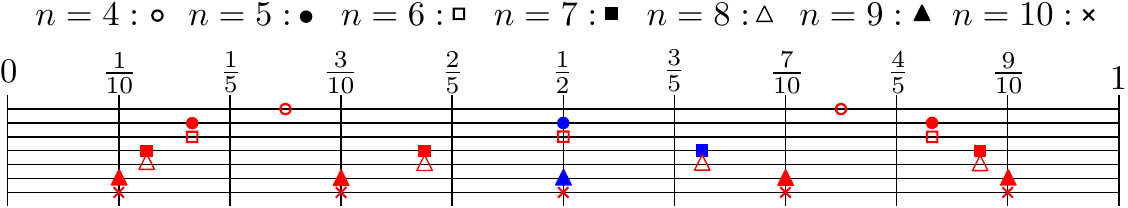}
	\caption{Facility placements $\strategyProfile_\text{pair}$ for $4\leq n \leq 10$. Co-located facilities are colored red, single facilities are colored blue.}
	\label{fig:eaton_lipsey_positions}
\end{figure}

\begin{theorem}\label{thm:eaton_lipsey}
	The locations $\strategyProfile_\text{pair}$ yields  a $\rho_n$-SPE in the game with~$n$ facilities with the following values of $\rho_n$.\\
	{ $\rho_4 = \frac{4 + \alpha - \alpha^2}{4}$, \\
		$\rho_5 = \frac{(4+\alpha)(\alpha(\alpha(3+\alpha)-3)-4)}{(2+\alpha)(\alpha(5\alpha-2)-8)}$, \\
		$\rho_6 = \frac{\alpha(4-\alpha(\alpha-7))-16}{2(\alpha(4+\alpha)-8)}$,\\
		$\rho_7 = \frac{(64-64\alpha+7\alpha^3)(16+\alpha(2+\alpha)(\alpha(\alpha-3)-2))}{2(32+\alpha(\alpha(\alpha-10)-16))(16+\alpha(\alpha-16+\alpha^2))}$, \\
		$\rho_8 = \frac{64-\alpha(48+\alpha(24+(\alpha-17)\alpha))}{4(16+\alpha(\alpha-16+\alpha^2))}$, \\
		$\rho_9 = \frac{(32+(\alpha-4)\alpha(2+\alpha)(1+2\alpha))(\alpha(4+3\alpha)-16)}{(\alpha-2)(4+\alpha)(64+\alpha(\alpha(\alpha-24)-32))}$,\\
		$\rho_{10} = \frac{\alpha(320-\alpha(\alpha(120(\alpha-31)\alpha)-16))-256}{2(\alpha(\alpha-4)(\alpha(4+3\alpha)-48)-128)}$}.
\end{theorem}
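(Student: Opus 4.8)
The plan is to mirror, for each fixed $n\in\{4,\dots,10\}$, the computational scheme already used in the proof of the analogous theorem for $\strategyProfile_\text{opt}$, but now applied to the co-located placement $\strategyProfile_\text{pair}$. First I would record the $\strategyProfile_\text{pair}$-induced client equilibrium: since the facilities sit at the $k$ points $\frac{2i-1}{2k}$ in co-located pairs, by symmetry each of the $k$ ``location classes'' attracts a client interval of length $\frac1k$, and each individual facility gets load (hence utility) exactly $\frac1{2k}$ when two facilities share a location and $\frac1{k}$ for the single unpaired facility in the odd case. Concretely one verifies this by checking that the indifference equations $(1-\alpha)|s_i-\beta_i|+\alpha\load_i(s)=(1-\alpha)|s_{i+1}-\beta_i|+\alpha\load_{i+1}(s)$ are satisfied with $\beta_i=\frac{i}{2k}$ when each $\load$ equals the claimed value; this is where the co-location matters, because two facilities at the same point each serve half of the natural Voronoi cell.

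Next, for each candidate deviator $\player$ I would set up the modified system of $n-1$ linear indifference equations describing the client equilibrium after $\player$ moves to a position $s_\player'$ in one of the sub-intervals $[s_k,s_{k+1}]$ (using the facts imported from~\cite{Peters2018}: the best response lies inside the deviator's own interval, and for the outermost facilities it is at $\beta_1'$ resp.\ $\beta_{n-1}'$). Solving each such system by Gaussian elimination gives the new interval border $\beta_\player'$ as a rational function of $\alpha$, and the improvement factor is $\rho_{s_\player}' = \utility_\player(\strategyProfile')/\utility_\player(\strategyProfile)$, i.e.\ $2k\,\beta_\player'$ or $k\,\beta_\player'$ depending on whether the deviator was paired. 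Taking the maximum of all these finitely many rational functions of $\alpha$ over all facilities and all target sub-intervals yields $\rho_n$; one then checks that the maximizing deviation is always ``collapse one facility out of a pair and merge it into the neighboring pair'' (the analog of the $\epsilon$-argument in the three-facility proof, showing it is better to co-locate than to sit strictly between), which produces the closed forms listed. Because $n$ is bounded by $10$, each of these is a finite, if tedious, symbolic computation; the statement is simply the tabulation of the resulting $\rho_n(\alpha)$.

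The main obstacle is bookkeeping rather than conceptual: one must carefully enumerate \emph{all} deviation targets for \emph{every} facility (paired-left, paired-right, and the single facility in the odd case behave differently, and a deviator can land to the left of, between, or to the right of the pair it leaves), set up the correct modified equation system in each case — in particular getting the load terms right when the deviator's departure changes whether an interval border is governed by one or two co-located facilities — and then verify that the asserted $\rho_n$ really dominates every competing improvement factor for all $\alpha\in[0,1]$. The symmetry of $\strategyProfile_\text{pair}$ cuts the casework roughly in half (it suffices to consider deviators in the left half and deviations toward the middle, exactly as in Theorem~\ref{thm:3facilities_alpha}), and the monotonicity-in-$\epsilon$ observation from the three-facility proof eliminates the ``strictly between'' positions, so the surviving cases reduce to a small number of best-response computations per $n$ whose outcomes are the displayed rational functions.
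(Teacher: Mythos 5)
Your overall scheme—solve the $n-1$ indifference equations for the induced client equilibrium, then for each facility and each target sub-interval $[s_k,s_{k+1}]$ set up the modified system, locate the best response at the relevant new interval border, and take the maximum of the resulting rational functions of $\alpha$—is exactly the paper's method, and your bookkeeping remarks (symmetry halves the casework; deviations outside $[s_1,s_n]$ and positions strictly between co-located neighbours are dominated) match what the paper does.

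However, there is a genuine error in your description of the starting configuration for odd $n$. You assert that in $\strategyProfile_\text{pair}$ each of the $k$ location classes attracts an interval of length $\frac1k$, so that paired facilities have load $\frac1{2k}$ and the single middle facility has load $\frac1k$, and that this is verified by checking the indifference equations with $\beta_i=\frac{i}{2k}$. That is true only for even $n$ (where all loads are equal by symmetry and the load terms cancel) or for $\alpha=0$. For odd $n$ and $\alpha>0$ the loads are \emph{not} equal across facilities, so the congestion term shifts the borders: e.g.\ for $n=5$ the indifference condition at the border between a paired facility and the single middle facility forces each paired facility to carry load $\frac{2+\alpha}{3(4+\alpha)}$ and the middle one $\frac{4-\alpha}{3(4+\alpha)}$, which interpolate between $(\tfrac16,\tfrac13)$ at $\alpha=0$ and $(\tfrac15,\tfrac15)$ at $\alpha=1$. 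Since the improvement factor is $\utility_i(\strategyProfile')/\utility_i(\strategyProfile)$, using $\frac1{2k}$ as the denominator would give incorrect $\rho_5,\rho_7,\rho_9$ (indeed the factor $(2+\alpha)$ in the denominator of the stated $\rho_5$ comes precisely from the correct baseline load). You must solve the full $(n-1)$-equation system for the initial borders in the odd case rather than assuming uniform splitting. A secondary, smaller inaccuracy: the maximizing deviation identified by the paper is not ``merge into the neighbouring pair'' but a move of an outermost paired facility to the \emph{new inner border} $\beta_2'$ of its own client interval; for $\alpha>0$ these are different locations, and the utility gained is $\beta_2'-\beta_1'$, not $2k\beta_i'$ as your formula suggests except for facility $1$.
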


\begin{proof}
	We compute the interval borders $\beta_1, \ldots, \beta_{n-1}$ for the proper client equilibrium by solving the system with $n-1$ equations  
	\begin{eqnarray*}(1-\alpha) |\strategy_{i} - \beta_i| + \alpha  \load_{i}(\strategyProfile) = (1-\alpha)  |\strategy_{i+1} - \beta_i| + \alpha  \load_{i+1}(\strategyProfile) \text{ for } i \in \{1, \ldots, n-1\}. \end{eqnarray*}
	The strategy change $s_i' < s_1$ and $s_i' > s_n$ is not an improvement since $\beta_1 \geq s_1$ and $\beta_{n-1} \leq s_n$ for an arbitrary facility $i$. As already mentioned, the best response for the leftmost and rightmost facility is to locate at $\beta_1$ and $\beta_{n-1}$, respectively. Together with $\load_i(\strategyProfile)$, it can be checked that this is not an improvement for facility $i$.
	So the best response for facility $\player$ is to locate inside the interval $[s_1,s_n]$.
	To compute the best response of facility $i$ we have to check all possible strategy changes. So $i$ can be located in each of the subintervals $[s_1,s_2]$, $[s_2,s_3]$, $\ldots$, $[s_{n-1},s_n]$. 
	Since facility $1$ and  $2$ are equivalent, we just have to consider facility $2$ and her strategy changes. Solving
	\begin{eqnarray*}
	(1-\alpha) |\beta_1' - s_1| + \alpha  \beta_1' &=&  (1-\alpha) (s_2'-\beta_1')+\alpha  (\beta_2'-\beta_1'), \\
	(1-\alpha) (\beta_2' - s_2') + \alpha  (\beta_2'-\beta_1') &=&  (1-\alpha) |s_3-\beta_2'|+\alpha  (\beta_3'-\beta_2'), \\
	 \ldots \\
	(1-\alpha) |\beta_{n-1}' - s_{n-1}| + \alpha  (\beta_{n-1}'-\beta_{n-2}') &=&   
	(1-\alpha) |s_n-\beta_{n-1}'|+\alpha (1-\beta_{n-1}'), 
	\end{eqnarray*}
	\noindent yields the new interval borders $\beta_i'$ for $1\leq i \leq n-1$ when facility~$2$ changes her strategy to $s_2' \in [s_2,s_3]$. Together with the result that the best response $s_\player^{\text{best}}$ of a facility $i$ is inside her corresponding interval, i.e., $s_\player^{\text{best}} \in \mathcal{I}_i(\strategyProfile, \choiceFunctions)$, so $s_2' \in [\beta_1',\beta_2']$ and it can be checked that $s_2' = \beta_2'$ is the best response for $s_2' \in [s_2,s_3]$.
	
	To check how good the other strategy changes are, we have to construct a modified system of equations, which respects that the considered facility $i$ is not anymore in the consecutive order $\strategy_1 \leq \strategy_2 \leq \dots \leq \strategy_n$ in $[0,1]$. This is done by setting up a suitable system of equations for each case $s_i' \in [s_k,s_{k+1}]$ for $1 \leq k \leq n-1$.
	So we can verify for all facilities $i$ for $1\leq i \leq n$ all possible strategy changes with the help of the system of equations.
	It turns out that for all $n \leq 10$ the facilities~$1$ and $2$, respectively have the highest possible improvement by moving to the new interval border $\beta_2'$.
\end{proof}

\noindent Our analytical results and agent-based simulations (see Section~\ref{subsec:empirical_support_conjecture}) suggest that the outmost facilities $1$ and $2$, respectively, yield the highest possible improvement by moving to the new interval border $\beta_2'$. Therefore we state the following conjecture.

\begin{conjecture}\label{conj:double_highest_improvement}
	Given a game with $n > 3 $ facilities and the state $s = (s_1, \ldots, s_n)$ for $n = 2k$ and $s = (s_1, \ldots,s_{k-1}, s_k, s_{k+2}, \ldots, s_{n+1})$ for $n = 2k-1$ for some $k \in \mathbb{N}$ with $s_{2 i -1} = s_{2i } = \frac{2i -1}{2k}$ for $i \in \{1, \ldots, k\}$. Then one of the leftmost facilities, $1$ or $2$, has the highest possible improvement factor by changing her strategy towards the middle to the new interval border $\beta_2'$.
\end{conjecture}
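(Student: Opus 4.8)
}
The plan is to reduce the conjecture to a finite family of explicit best‑response computations, each solvable in closed form via the continued fractions $\contFraQua^m$, and then to compare the resulting improvement factors.

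First I would fix the baseline. In $\strategyProfile_\text{pair}$ every facility is co‑located with exactly one other (for odd $n$, except the central one), and the indifference conditions give the inter‑pair cuts $\beta_{2i}=\tfrac ik$ while each intra‑pair cut $\beta_{2i-1}$ sits in $[\tfrac{i-1}{k},\tfrac ik]$, so that $\loadPlayer(\strategyProfile_\text{pair})=\tfrac1n$ for every $\player$. Invoking the facts quoted before Theorem~\ref{thm:eaton_lipsey} — the best response of any facility lies in $[\strategy_1,\strategy_n]$, a deviating facility's best response within the subinterval it targets sits at one of the new interval borders, and $\strategyProfile_\text{pair}$ is left–right symmetric — it suffices to consider a facility $\player\in\{1,\dots,\lceil n/2\rceil\}$ moving rightward into one of the blocks $[\strategy_k,\strategy_{k+1}]$. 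This leaves only $O(n^2)$ candidate deviations; for each one I must produce $\rho'_{s_\player}$ and show that the maximum is attained by $\player\in\{1,2\}$ landing at $\beta'_2$.

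For a single such deviation I would follow the scheme of Theorems~\ref{theorem:arbitrary_n_double} and~\ref{theorem:optimum_arbitrary_n}: write the $n-1$ indifference equations of the post‑deviation client equilibrium, with the moved facility deleted from its old slot and inserted into the target block. Away from the two perturbed positions the system is tridiagonal with diagonal $2$ and off‑diagonals $-\al$, so Gaussian elimination is governed exactly by the recurrence $\contFraQua^{m}=\tfrac{-\al^2/4}{1+\contFraQua^{m-1}}$; the relevant new border — hence $\utility_\player(\strategyProfile')$, which equals $\beta'_1$ for $\player=1$ and a difference $\beta'_j-\beta'_{j-1}$ for an interior facility — emerges as an affine expression in the free position multiplied by a product of factors $-\tfrac2\al\contFraQua^{j}$, in direct analogy with the closed form $\approxOptimumGeneral$. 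Dividing by $\utility_\player(\strategyProfile_\text{pair})=\tfrac1n$ yields $\rho'_{s_\player}$ as an explicit function of $\al$ and of the two indices (which facility deviates, which block it enters).

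The comparison is the crux, and the expected main obstacle. I would isolate two monotonicity facts about these expressions. First, for a fixed deviating facility the \emph{nearest} admissible block gives the largest improvement, since every additional border separating the old from the new position contributes one more factor $-\tfrac2\al\contFraQua^{j}$ of modulus strictly below $1$ (this uses $0<\al^2/4\le\tfrac14$, the convergence of $\contFracOpe_{j\ge1}\tfrac{-\al^2/4}{1}$, and monotonicity of $m\mapsto\contFraQua^m$). Second, among facilities entering their nearest block, an outermost one benefits most, because these continued‑fraction weights decay geometrically with the distance to the free boundary $0$ (resp.\ $1$), so a facility near the centre has its deviation attenuated from both sides. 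Combining these with two easy checks — a within‑own‑pair move and, for odd $n$, a move of the lone central facility are strictly dominated — pins the maximiser to facility $1$ or $2$ moving to $\beta'_2$. The delicate point is to make the two monotonicity claims rigorous \emph{uniformly in $\al\in[0,1]$}: as $\al\to1$ all $\rho'_{s_\player}\to1$, so the comparison has to be conducted on a suitably normalised expansion, which is exactly why the statement is only verified analytically for $n\le10$ (Theorem~\ref{thm:eaton_lipsey}) and by the simulations, and stated as a conjecture here.
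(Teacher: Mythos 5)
There is a genuine gap here, and it is worth being explicit that the paper itself does not prove this statement: it is stated as a \emph{conjecture}, verified exactly only for $n\le 10$ (via the case-by-case systems of equations behind Theorem~\ref{thm:eaton_lipsey}) and otherwise supported only by agent-based simulation. Your proposal is a sensible program for attacking it, and its first half — reduction by left--right symmetry to finitely many candidate deviations, each solved as a tridiagonal system via generalized continued fractions — matches how the paper computes the one deviation it actually needs in the proof of Theorem~\ref{theorem:arbitrary_n_double}. But the proposal is not a proof. The entire comparison step rests on two monotonicity claims (the nearest block is the most profitable target; outermost facilities profit most), which you justify only by appeal to ``factors of modulus strictly below $1$'' and ``geometric decay'', and then concede cannot currently be made rigorous uniformly in $\al$. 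That concession is accurate and fatal: as $\al\to 1$ the attenuation factors tend to modulus $1$, all improvement factors collapse to $1$, and the ordering of the candidates must be extracted from lower-order terms of products of continued-fraction convergents; nothing in your argument controls these. The crux of the conjecture is therefore exactly as open after your argument as before it.

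Two smaller technical points. For the pairing placement the relevant continued fraction is $\contFraLin^m$ (built from $-\al/4$), not $\contFraQua^m$ (built from $-\al^2/4$): the post-deviation system is not uniformly ``diagonal $2$, off-diagonals $-\al$'' but alternates between intra-pair rows, where the $(1-\al)$ distance terms survive on both sides and cancel only because $s_{2i-1}=s_{2i}$, and inter-pair rows with coefficient $\al$, as the Gaussian elimination displayed in the proof of Theorem~\ref{theorem:arbitrary_n_double} shows. Moreover, the utility of an interior deviating facility after its move is a \emph{difference} $\beta'_j-\beta'_{j-1}$ of two such closed forms, so even granting monotonicity of the individual attenuation factors you would still need to compare differences of them across facilities and target blocks — precisely the step at which the exact verification already becomes unwieldy beyond $n=10$.
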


\noindent
With the help of generalized continued fractions, define
\begin{equation*}
\contFraLin^m := \contFracOpe_{j=1}^{m}{\frac{-\alpha/4}{1}} = \cfrac{-\alpha/4}{1 +\cfrac{-\alpha/4}{1 + \ddots \cfrac{-\alpha/4} {1 }} },
\end{equation*}

\begin{eqnarray*}
\beta'_1 &=& 
\frac{1}{1-\frac{\alpha}{2\left(\alpha + 1 + 2 \contFraLin^{n-3} \right)}}
\Bigg(
\frac{1-\alpha}{2n}
+ \frac{1-\alpha}{2(\alpha +1)} \frac{1}{1+\frac{2}{\alpha+1}\contFraLin^{n-3}} \frac{3}{n}
- \frac{2}{\alpha+1}\frac{1}{1+\frac{2}{\alpha+1}\contFraLin^{n-3}}\\ && \Bigg(\sum_{k=2}^{n/2-1}\Bigg( \frac{(-2)^{2k-3}}{a^{k-1}}\frac{2(1-\alpha)k}{n}
\prod_{j=n-2k}^{n-3}{\contFraLin^{j} } \Bigg) + \frac{(-2)^{n-4}}{2a^{(n-4)/2}}   \prod_{j=1}^{n-3}{\contFraLin^{j} }  \Bigg)
\Bigg),
\end{eqnarray*}

\begin{eqnarray*}
\beta'_2 & =& \frac{1}{1+\frac{2}{1+\alpha}\contFraLin^{n-3}}
\Bigg(
\frac{\alpha}{1+\alpha} \beta'_1
+\frac{1-\alpha}{1+\alpha}\frac{3}{n}
+ \sum_{k=2}^{n/2-1} 
\Bigg(\frac{-4}{1+\alpha} \frac{(-2)^{2k-3}}{a^{k-1}} \frac{2(1-\alpha)k}{n}
\prod_{j=n-2k}^{n-3}{\contFraLin^{j} } 
\Bigg)	\\
&&- \frac{2}{1+\alpha}\frac{(-2)^{n-4}}{a^{(n-4)/2}} \prod_{j=1}^{n-3}{\contFraLin^{j} }
\Bigg)  \text{\quad and \quad} \approxPairingGeneral = n \left(\beta'_2-\beta'_1\right).
\end{eqnarray*}
\noindent Using Conjecture~\ref{conj:double_highest_improvement} and $\approxPairingGeneral$ we can state the approximation factor for an arbitrary even number of facilities and an arbitrary $\alpha$.

\begin{theorem}\label{theorem:arbitrary_n_double}
	Assuming  Conjecture \ref{conj:double_highest_improvement} holds, for $n>3$ facilities with $n=2k$ and $k\in \naturals$, the game has a $\rho$-approximate pure subgame perfect equilibrium with $\rho =\approxPairingGeneral$.
\end{theorem}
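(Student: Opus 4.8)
The plan is to verify both conditions of a $\rho$-SPE for the state $(\strategyProfile_\text{pair},\clientStrategyProfile_{\strategyProfile_\text{pair}})$ with $\rho=\approxPairingGeneral$. Condition~2 holds automatically since we always use the unique proper client equilibrium (Lemma~\ref{lem_local_pot_min_is_EQ} together with the uniqueness result of~\cite{Peters2018}). For condition~1, I first record the structure of $\clientStrategyProfile_{\strategyProfile_\text{pair}}$: the indifference equations force equal loads inside each co-located pair and an equal split of clients between neighbouring pairs, so the borders are $\beta_i=\frac in$ and $\utilityPlayer(\strategyProfile_\text{pair})=\frac1n$ for every facility $i$. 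By Conjecture~\ref{conj:double_highest_improvement} the largest improvement factor over all facilities and all unilateral deviations is realised by one of the two leftmost (co-located) facilities, say facility~$2$, relocating into $[s_2,s_3]$; since a facility's best response lies in its own client interval and, for this facility, at that interval's far endpoint (the best-response facts of~\cite{Peters2018} recalled in Section~\ref{sec:analytical}), the relevant deviation is $s_2'=\beta_2'$. Hence it remains to compute $\beta_1'$ and $\beta_2'$ in $\strategyProfile'=(\strategyProfile_{-2},\beta_2')$ and to show $n(\beta_2'-\beta_1')=\approxPairingGeneral$; this value is then simultaneously the maximal improvement factor and the approximation guarantee.

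Second, I would write the $n-1$ indifference equations of $\strategyProfile'$, where facility~$1$ stands alone at $\frac1{2k}$, facility~$2$ sits at $\beta_2'$, and the $k-1$ pairs $(3,4),\dots,(n-1,n)$ are intact. After substituting $s_2'=\beta_2'$, the two boundary equations reduce to $2\beta_1'-\beta_2'=\frac{1-\alpha}{n}$ and a relation among $\beta_1',\beta_2',\beta_3'$; each intact pair contributes $2\beta_{2m-1}'=\beta_{2m-2}'+\beta_{2m}'$ (with $\beta_n'=1$); and each of the $k-2$ gaps between consecutive pairs contributes $(1-\alpha)(\beta_{2m}'-\tfrac{2m-1}{n})+\alpha(\beta_{2m}'-\beta_{2m-1}')=(1-\alpha)(\tfrac{2m+1}{n}-\beta_{2m}')+\alpha(\beta_{2m+1}'-\beta_{2m}')$. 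This is a tridiagonal system of size $n-1$.

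Third, I would solve it by Gaussian elimination, sweeping from the $\beta_n'=1$ end toward $\beta_1'$. The elimination coefficient produced at each pair row has the form $-\tfrac2\alpha\contFraLin^{2j-1}$, where $\contFraLin^{m}=\contFracOpe_{j=1}^{m}{\frac{-\alpha/4}{1}}$ — the factor $\tfrac14$ arises from interleaving the pair rows (coefficient $\tfrac12$) with the $\alpha$-weighted gap rows — so the sweep bottoms out at depth $n-3$ once it reaches the border $\beta_3'$ just right of facility~$2$. Propagating the inhomogeneous right-hand sides ($\tfrac{2m}{n}$ from the gaps, $s_1=\tfrac1{2k}$ from the first boundary row) through the same sweep and back-substituting yields exactly the displayed closed forms for $\beta_1'$ and $\beta_2'$: the homogeneous part supplies the $\contFraLin$-quotient prefactors, and the particular solution supplies the sum $\sum_{k=2}^{n/2-1}$ with weights $\tfrac{(-2)^{2k-3}}{a^{k-1}}\prod_j\contFraLin^{j}$ plus the trailing $\tfrac{(-2)^{n-4}}{2a^{(n-4)/2}}\prod_{j=1}^{n-3}\contFraLin^{j}$ term. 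Finally $\rho=\utility_2(\strategyProfile')/\utility_2(\strategyProfile_\text{pair})=n(\beta_2'-\beta_1')=\approxPairingGeneral$, and by Conjecture~\ref{conj:double_highest_improvement} this is the largest improvement factor in the state, so it is a $\rho$-SPE.

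The main obstacle is the bookkeeping in the third step: proving by induction on the elimination depth that the partial fractions occurring in the sweep coincide, after the $-\tfrac2\alpha$ rescaling, with the convergents $\contFraLin^{m}$, and that the particular solution — with its geometrically growing $(-2)^{2k-3}/a^{k-1}$ weights — telescopes into precisely the stated sum and product. The homogeneous part is a clean second-order linear recurrence whose solution is the continued fraction, so the real work is tracking the particular solution and checking that the boundary contribution of facility~$1$ enters with the correct sign and normalisation. A minor additional verification is that $\beta_2'\in[s_2,s_3]$ and that the resulting $\beta_0'\le\beta_1'\le\dots\le\beta_n'$ are ordered, so that the system genuinely describes the unique proper client equilibrium of $\strategyProfile'$; this follows from the best-response facts already invoked.
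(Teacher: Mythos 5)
Your proposal follows essentially the same route as the paper's proof: restrict attention via Conjecture~\ref{conj:double_highest_improvement} to facility~$2$ moving to $\beta_2'$, set up the $n-1$ indifference equations for the deviated state, solve the resulting tridiagonal system by Gaussian elimination with generalized continued fractions, and conclude $\rho = n(\beta_2'-\beta_1') = \approxPairingGeneral$. The additional remarks on verifying the client-equilibrium condition and the ordering of the $\beta_i'$ are sensible but do not change the argument.
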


\begin{proof}
	Consider the state $\strategyProfile = (\strategy_1, \ldots, \strategy_n)$ with $\strategy_{2i-1}=\strategy_{2i} = \frac{2i-1}{2k}$ for $i \in \{1, \ldots, k\}$.
	The clients' intervals split at $\beta_i= \frac{i}{2k}\ \forall i \in \{1, \ldots, 2k-1\}$, so each facility $i$ has a utility of $\utilityPlayer(\strategyProfile) = \frac{1}{n}$.
	Using Conjecture~\ref{conj:double_highest_improvement}, we only consider facility $2$ with a move to her new interval border $\beta'_2$.
	The following system of linear equations characterizes the new state $\strategyProfile' = (\strategyProfile_{-2}, \beta'_2)$:
	{
		\begin{eqnarray*}
		\nal(\beta'_1-\strategy_1) + \al(\beta'_1-\beta'_0) & =& \nal(\strategy'_2-\beta'_1) + \al(\beta'_2-\beta'_1)\\
		\nal(\beta'_2-\strategy'_2) + \al(\beta'_2-\beta'_1) & =& \nal(\strategy_3-\beta'_2) + \al(\beta'_3-\beta'_2) \\
		\nal(\beta'_{2i-1}-\strategy_{2i-1})\ +\ \al(\beta'_{2i-1}-\beta'_{2i-2}) & = & \nal(\beta'_{2i-1}-\strategy_{2i}) + \al(\beta'_{2i}-\beta'_{2i-1}) \\
		\forall i\ \in\ \{2,\ \ldots\ ,\ k-1\} \\
			\nal(\beta'_{2i}-\strategy_{2i})\ +\ \al(\beta'_{2i}-\beta'_{2i-1}) & = & \nal(\strategy_{2i+1}-\beta'_{2i})\ +\ \al(\beta'_{2i+1}-\beta'_{2i}) \\
		\forall i\ \in \{2,\ \ldots\ ,\ k-1\}\\
		\nal(\beta'_{n-1}-\strategy_{n-1})\ +\ \al(\beta'_{n-1}-\beta'_{n-2}) & = & \nal(\beta'_{n-1}-\strategy_n)\ +\ \al(\beta'_n-\beta'_{n-1}).
		\end{eqnarray*}
	}
Solving these equations for $\beta_i$ with $\beta_0=0, \beta_n=1, s_2'= \beta_2$ results in:

	\begin{eqnarray*}
\beta_1 &=&  \frac{1}{2}\beta_2+\frac{1-\alpha}{2}s_1,\\
\beta_2 &=& \frac{\alpha}{1+\alpha}\beta_1 + \frac{\alpha}{1+\alpha}\beta_3+\frac{1-\alpha}{1+\alpha}s_3,\\
\beta_{2i-1} &=& \frac{1}{2}\beta_{2i-2} + \frac{1}{2}\beta_{2i}+\frac{1-\alpha}{2\alpha}s_{2i-1}+\frac{\alpha-1}{2\alpha}s_{2i},\ \forall i \in \{2, \ldots, k-1\},\\
\beta_{2i} &=& \frac{\alpha}{2}\beta_{2i-1} + \frac{\alpha}{2}\beta_{2i+1}+\frac{1-\alpha}{2}s_{2i}+\frac{1-\alpha}{2}s_{2i+1},\ \forall i \in \{2, \ldots, k-1\},\\
\beta_{n-1} &=& \frac{1}{2}\beta_{n-2} + \frac{1}{2}\beta_n+\frac{1-\alpha}{2\alpha}s_{n-1}+\frac{\alpha-1}{2\alpha}s_n.
\end{eqnarray*}
	This system can be solved for $\beta'_1$ and $\beta'_2$ with the help of the Gaussian elimination and generalized continued fractions.
		We apply the Gaussian elimination:
		
			\begin{align*}
		\left(
		\begin{array}{rrrrrrr|lllll}
		1 & -\frac{1}{2} & & & & && \frac{1-\alpha}{2}s_1\\
		-\frac{\alpha}{1+\alpha} & 1 & -\frac{\alpha}{1+\alpha} & & & && \frac{1-\alpha}{1+\alpha}s_3\\
		&-\frac{1}{2}&1&-\frac{1}{2}&&&&\frac{1-\alpha}{2\alpha}s_{2i-1}&+&\frac{\alpha-1}{2\alpha}s_{2i}\\
		&& -\frac{\alpha}{2}&1&-\frac{\alpha}{2}&&&\frac{1-\alpha}{2}s_{2i}&+&\frac{1-\alpha}{2}s_{2i+1}\\
		&&&\ldots&&&&\\
		&&&-\frac{1}{2}&1&-\frac{1}{2}&&\frac{1-\alpha}{2\alpha}s_{2i-1}&+&\frac{\alpha-1}{2\alpha}s_{2i}\\
		&&&& -\frac{\alpha}{2}&1&-\frac{\alpha}{2}&\frac{1-\alpha}{2}s_{2i}&+&\frac{1-\alpha}{2}s_{2i+1}\\
		&&&&&-\frac{1}{2}&1&\frac{1-\al}{2\al}s_{n-1}&+&\frac{\al-1}{2\al}s_n&+&\frac{1}{2}
		\end{array}
		\right).
		\end{align*}
		By adding to the second to last row $\frac\alpha2$ times the last row, our two last rows look as follows:
		
	\begin{align*}
\left(
\begin{array}{ccccrrr|lllll}
&&&& -\frac{\alpha}{2}&1- \frac{\alpha}{4}&0&\frac{1-\alpha}{2}s_{2i}&+&\frac{1-\alpha}{2}s_{2i+1}&+&\frac{\alpha}{4}\\
&&&&&-\frac{1}{2}&1&\frac{1-\al}{2\al}s_{n-1}&+&\frac{\al-1}{2\al}s_n&+&\frac{1}{2}
\end{array}
\right).
\end{align*}
Next, by multiplying the second last row with $\frac{1}{2 \left( 1- \frac\alpha4 \right)}$ and adding it to the last but two, the left side of the three last rows look as follows: 
	\begin{align*}
\left(
\begin{array}{cccccrr|l}
&&&-\frac{1}{2}&1-\frac{\alpha}{4(1-\frac{\alpha}{4})}&0&0&\\
&&&& -\frac{\alpha}{2}&1- \frac{\alpha}{4}&0&\\
&&&&&-\frac{1}{2}&1&
\end{array}
\right).
\end{align*}
We continue this scheme and end up with a left side of the matrix which looks as follows:
			\begin{align*}
{\tiny{\left(
\begin{array}{cccccrr|l}
1 - \frac{\alpha}{2(1+\alpha)(1- \frac{\alpha}{2(1+\alpha)(1+\contFraLin^{n-4})})} & & & &&&&\\
	-\frac{\alpha}{1+\alpha} & 1- \frac{\alpha}{2(1+\alpha)(1+\contFraLin^{n-4})} &&&&&& \\
&\ldots&&&&& \\
&-\frac{1}{2}&1 - \frac{a}{4 (1 - \frac{a}{4 (1 - \frac{a}{4 (1 - \frac{a}{4})})})}&0&0&0&0& \\
&&-\frac{\alpha}{2}&1 - \frac{\alpha}{4 (1 - \frac{\alpha}{4 (1 - \frac{\alpha}{4})})}&0&0&0& \\
&&&& \ldots
\end{array}
\right). }}
\end{align*}
For the right side notice, that for $i \in \{2, \ldots, k-1\}$ $\frac{1-\alpha}{2\alpha}s_{2i-1} + \frac{\alpha-1}{2\alpha}s_{2i}$ is equal to $0$, since both facilities are located at the same position, i.e. $s_{2i-1} = s_{2i}$.
	
	Since we consider facility $2$, we have $\utility_2(\strategyProfile') = \beta'_2 - \beta'_1$.
	Together with $\utility_2(\strategyProfile)=\frac{1}{n}$ and Conjecture \ref{conj:double_highest_improvement} we get
	$
	\rho= \frac{\utility_2(\strategyProfile')}{\utility_2(\strategyProfile)} =  \approxPairingGeneral = n \left(\beta'_2-\beta'_1\right).
	$
\end{proof}

\begin{figure}[htb]
	\centering
	\begin{tikzpicture}
	\begin{axis}[xmin=0, xmax=1, ymin=0.98, ymax=1.3, samples at={0,0.05,...,0.95,1}, width=\columnwidth, height=3.5cm, xlabel={$\alpha$}, ylabel={$\rho$}, ylabel near ticks, xtick={0,0.2,0.4,0.6,0.8,1},legend style={legend columns=3}, x label style={ below=-3mm}]
	\addplot[blue, dashed] {1/4*(-x^2+x+4)};
	\addplot[red, dashed] {((4+x)*(x*(x*(3+x)-3)-4))/((2+x)*(x*(5*x-2)-8))};
	\addplot[blue, solid] {(x^3-7*x^2-4*x+16)/(-2*x^2-8*x+16)};
	\addplot[red, solid] {((64-64*x+7*x^3)*(16+x*(2+x)*(x*(x-3)-2)))/(2*(32+x*(x*(x-10)-16))*(16+x*(x-16+x^2)))}; 
	\addplot[green, dashed] {((32+(x-4)*x*(2+x)*(1+2*x))*(x*(4+3*x)-16))/((x-2)*(4+x)*(64+x*(x*(x-24)-32)))};
	\legend{$n=4$, $n=5$, $n \geq 6 $ (even), $n=7$, $n=9$}
	\end{axis}
	\end{tikzpicture}
	\vspace*{-1em} 
	\caption{Approximation factor $\rho$ for $\strategyProfile_\text{pair}$ as a function of $\alpha$.}
	\label{fig:facilityApproxFactorsPairingFacilities}
\end{figure}

\noindent Figure~\ref{fig:facilityApproxFactorsPairingFacilities} summarizes the analytically obtained $\rho$-values. The influence of $n$ is negligible for even $n$ with $n\geq 6$. Note, that in contrast to $\strategyProfile_\text{opt}$, the obtained approximation factor is much lower for the facility placement $\strategyProfile_\text{pair}$ with co-located facilities. 

\subsection{Quality of the $\rho$-SPE}

The social costs $\socialcosts(\strategyProfile, \clientStrategyProfile)$ of a strategy profile $(\strategyProfile, \clientStrategyProfile)$ is defined as the sum over the costs of all client agents, i.e., $\socialcosts(\strategyProfile, \clientStrategyProfile) = \int_{\clients} \costsClient(\strategyProfile,\clientStrategyProfile) \text{d}\client$.
Similarly to the Price of Anarchy, we define the quality $\quality$ of an equilibrium as in~\cite{DBLP:journals/corr/FournierS16}.
We are interested in the costs of the client players, while the strategies of the facility players define the stable states.
We define the social optimum of the game as $\text{opt} = \min_{(\strategyProfile, \clientStrategyProfile) \in \strategyProfiles \times \clientStrategyProfiles} \socialcosts(\strategyProfile, \clientStrategyProfile)$.
Then, the quality of an (approximate) pure subgame perfect equilibrium $(\strategyProfile, \clientStrategyProfile)$ is defined as $\quality(\strategyProfile, \clientStrategyProfile) = \frac{\socialcosts(\strategyProfile, \clientStrategyProfile)}{\text{opt}}$.

\begin{lemma}\label{lemma:cost_opt}
	The social optimum $s_{\text{opt}} = (s_1, \ldots, s_n)$ with $s_i = \frac{2i-1}{2n}$ for $i \in \{1, \ldots, n\}$ of the game has $\socialcosts(\strategyProfile_{\text{opt}}, \clientStrategyProfile) = \frac{1+3\alpha}{4n}$.
\end{lemma}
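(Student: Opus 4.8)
The plan is to prove the claim in two parts: an \emph{upper bound} obtained by directly evaluating $\socialcosts$ on the profile $\strategyProfile_{\text{opt}}$ together with its induced client equilibrium, and a matching \emph{lower bound} valid for \emph{every} profile $(\strategyProfile,\clientStrategyProfile)\in\strategyProfiles\times\clientStrategyProfiles$, which simultaneously certifies that $\strategyProfile_{\text{opt}}$ is a social optimum and that its cost equals the stated value.

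For the upper bound I would first note that in $\strategyProfile_{\text{opt}}$ the client indifference conditions are solved by the equal split $\beta_i=\tfrac{i}{n}$ (immediate by symmetry, and already recorded in the proof of Theorem~\ref{theorem:optimum_arbitrary_n}), so facility $i$ serves the interval $[\tfrac{i-1}{n},\tfrac{i}{n}]$, has load $\tfrac1n$, and sits at that interval's midpoint $\strategy_i=\tfrac{2i-1}{2n}$. A client at $x\in[\tfrac{i-1}{n},\tfrac{i}{n}]$ pays $(1-\alpha)|x-\strategy_i|+\alpha\tfrac1n$, and the substitution $u=x-\strategy_i$ gives $\int_{(i-1)/n}^{i/n}|x-\strategy_i|\,dx=\int_{-1/(2n)}^{1/(2n)}|u|\,du=\tfrac{1}{4n^2}$. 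Summing the distance and congestion contributions over the $n$ facilities yields $\socialcosts(\strategyProfile_{\text{opt}},\clientStrategyProfile)=n\bigl((1-\alpha)\tfrac{1}{4n^2}+\alpha\tfrac{1}{n^2}\bigr)=\tfrac{1-\alpha}{4n}+\tfrac{\alpha}{n}=\tfrac{1+3\alpha}{4n}$.

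For the lower bound I would take an arbitrary $(\strategyProfile,\clientStrategyProfile)$, let $A_i$ be the measurable set of clients mapped to facility $i$ and $\ell_i=|A_i|$, so $\sum_{i=1}^n\ell_i=1$ and $\socialcosts=\sum_{i=1}^n\bigl((1-\alpha)\int_{A_i}|\strategy_i-x|\,dx+\alpha\ell_i^2\bigr)$. The key estimate is that for any point $s$ and any measurable $A\subseteq[0,1]$ with $|A|=\ell$ one has $\int_A|s-x|\,dx\ge\tfrac{\ell^2}{4}$: by the layer-cake identity $\int_A|s-x|\,dx=\int_0^\infty|A\setminus[s-t,s+t]|\,dt\ge\int_0^{\ell/2}(\ell-2t)\,dt=\tfrac{\ell^2}{4}$, using $|A\setminus[s-t,s+t]|\ge\ell-2t$. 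Hence $\socialcosts\ge\sum_i\bigl(\tfrac{1-\alpha}{4}+\alpha\bigr)\ell_i^2=\tfrac{1+3\alpha}{4}\sum_i\ell_i^2\ge\tfrac{1+3\alpha}{4n}$, the last step being $\sum_i\ell_i^2\ge\tfrac1n(\sum_i\ell_i)^2$ (Cauchy--Schwarz / power mean). Combined with the upper bound this gives $\text{opt}=\tfrac{1+3\alpha}{4n}$, attained by $\strategyProfile_{\text{opt}}$.

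The calculations are routine; the only points that deserve care are (i) establishing $\int_A|s-x|\,dx\ge\ell^2/4$ for arbitrary measurable $A$ rather than merely intervals, which the layer-cake step handles cleanly, and (ii) confirming that the $\strategyProfile_{\text{opt}}$-induced client equilibrium used in the lemma is actually cost-minimizing for $\strategyProfile_{\text{opt}}$ --- this follows from the equality analysis of the two inequalities above, which forces every $\ell_i=\tfrac1n$ and every $A_i$ to be the length-$\tfrac1n$ interval centred at $\strategy_i$, i.e.\ exactly the equal split $\beta_i=\tfrac{i}{n}$.
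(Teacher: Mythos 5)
Your proof is correct, and it is strictly more complete than the one in the paper. The paper's proof consists only of your \emph{upper bound} step: it observes that in $\strategyProfile_{\text{opt}}$ the client equilibrium splits at $\beta_i=\tfrac{i}{n}$, so each facility has load $\tfrac1n$ and sits at its interval's midpoint, and then evaluates $\socialcosts = n\bigl(\int_0^{1/n}(1-\alpha)\lvert\tfrac{1}{2n}-x\rvert+\tfrac{\alpha}{n}\,\mathrm{d}x\bigr)=\tfrac{1+3\alpha}{4n}$ --- exactly your midpoint integral. The claim that $\strategyProfile_{\text{opt}}$ actually \emph{is} the social optimum is only asserted in the surrounding text (``this facility placement minimizes the average client cost'') and never proved. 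Your second half supplies that missing certificate: the layer-cake bound $\int_A\lvert s-x\rvert\,\mathrm{d}x\ge\ell^2/4$ for arbitrary measurable $A$ with $\lvert A\rvert=\ell$, combined with $\sum_i\ell_i^2\ge\tfrac1n$, gives $\socialcosts\ge\tfrac{1+3\alpha}{4n}$ for \emph{every} profile $(\strategyProfile,\clientStrategyProfile)$, and the equality analysis pins down the minimizer. This matters because the quantity $\mathrm{opt}$ used in the subsequent quality theorems is defined as a minimum over all profiles, so the lemma as used really does need the lower bound; your version buys that rigor at the cost of one extra (clean) estimate, while the paper's version is shorter but leans on an unproved side remark.
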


\begin{proof}
	Consider $s_{\text{opt}} = (s_1, \ldots, s_n)$. The interval borders $\beta_i  = \frac{i}{n}$ fulfill for any two neighboring intervals  $\mathcal{I}_i(\strategyProfile) = [\beta_{i-1},\beta_i]$, $\mathcal{I}_j(\strategyProfile) = [\beta_i,\beta_{i+1}]$ the equation $(1-\alpha)  |\strategy_{\choiceFunctions(\strategyProfile,b_i)} - \beta_i| + \alpha  \load_{\choiceFunctions(\strategyProfile,b_i)}(\strategyProfile, \clientStrategyProfile) = (1-\alpha)  |\strategy_{\choiceFunctions(\strategyProfile,a_j)} - \beta_i| + \alpha  \load_{\choiceFunctions(\strategyProfile,a_j)}(\strategyProfile, \clientStrategyProfile)$. So each facility ${\player}$ has the load $\load_{\player}(\strategyProfile, \clientStrategyProfile) = \frac{1}{n}$ and is located in the middle of her corresponding interval. Hence, it follows that $$\socialcosts(\strategyProfile, \clientStrategyProfile) = n \left(\int_0^\frac{1}{n} \left(1-\alpha\right) \left|\frac{1}{2n}-x\right| +\frac{\alpha}{n}\ \text{d}x\right) = \frac{1+3\alpha}{4n}.$$
\end{proof}

\begin{theorem}
	Given a game with $n = 2k$ players for some $k \in \mathbb{N}$ and the state $\strategyProfile_\text{pair} = (s_1, \ldots, s_n)$ with $s_{2 i -1} = s_{2i } = \frac{2i -1}{2k}$ for $i \in \{1, \ldots, k\}$, then $\quality(\strategyProfile_\text{pair}, \clientStrategyProfile) = \frac{2\alpha+2}{3\alpha+1}$.
\end{theorem}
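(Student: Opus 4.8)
The plan is to compute the social cost $\socialcosts(\strategyProfile_\text{pair}, \clientStrategyProfile)$ explicitly and then divide by the social optimum, which by Lemma~\ref{lemma:cost_opt} equals $\text{opt} = \frac{1+3\alpha}{4n}$.

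The first step is to pin down the proper client equilibrium induced by $\strategyProfile_\text{pair}$. As already derived at the beginning of the proof of Theorem~\ref{theorem:arbitrary_n_double}, for $n=2k$ the interval borders are $\beta_i = \frac{i}{2k}$ for $i \in \{1,\dots,2k-1\}$, so every facility carries load $\frac1n$. This also follows directly from the indifference conditions and the mirror symmetry of $\strategyProfile_\text{pair}$: at a border separating two co-located facilities the distance terms cancel and force equal loads, while at a border between two adjacent occupied positions symmetry forces that border to sit at the midpoint. Consequently the two facilities placed at $\frac{2i-1}{2k}$ jointly serve exactly the interval $\left[\frac{i-1}{k},\frac{i}{k}\right]$, and every client $\client$ in that interval is serviced by a facility at distance $\left|\frac{2i-1}{2k}-\client\right|$ whose load is $\frac1n$.

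Next I would evaluate the integral. Using $\costsClient(\strategyProfile,\clientStrategyProfile) = (1-\alpha)\left|\strategy_{\choiceFunctions(\strategyProfile,\client)}-\client\right| + \alpha\cdot\tfrac1n$ and summing over the $k$ identical blocks,
\begin{align*}
\socialcosts(\strategyProfile_\text{pair},\clientStrategyProfile)
= k\int_0^{1/k}\!\!\left[(1-\alpha)\left|\tfrac{1}{2k}-x\right| + \tfrac{\alpha}{n}\right]\!dx
= k\!\left((1-\alpha)\tfrac{1}{4k^2} + \tfrac{\alpha}{2k^2}\right)
= \frac{1+\alpha}{4k} = \frac{1+\alpha}{2n},
\end{align*}
where $\int_0^{1/k}\left|\frac{1}{2k}-x\right|\,dx = \frac{1}{4k^2}$ is the combined area of two small triangles. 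Dividing by $\text{opt}$ from Lemma~\ref{lemma:cost_opt} then gives
$\quality(\strategyProfile_\text{pair},\clientStrategyProfile) = \frac{(1+\alpha)/(2n)}{(1+3\alpha)/(4n)} = \frac{2(1+\alpha)}{1+3\alpha} = \frac{2\alpha+2}{3\alpha+1}$.

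The only step requiring genuine attention is the first one — verifying that $\strategyProfile_\text{pair}$ induces the uniform load $\frac1n$ and the symmetric block structure of the client intervals — but this is exactly what the earlier analysis of $\strategyProfile_\text{pair}$ already establishes (and it is independent of Conjecture~\ref{conj:double_highest_improvement}, since here we only need the equilibrium at $\strategyProfile_\text{pair}$ itself, not at a deviation). The remainder is a one-variable integral and an arithmetic simplification, so I do not anticipate a real obstacle.
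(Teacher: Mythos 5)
Your proposal is correct and follows essentially the same route as the paper: determine that the $\strategyProfile_\text{pair}$-induced client equilibrium has borders $\beta_i=\frac{i}{n}$ with uniform loads $\frac1n$, integrate the client cost to get $\socialcosts=\frac{1+\alpha}{2n}$, and divide by the optimum $\frac{1+3\alpha}{4n}$ from Lemma~\ref{lemma:cost_opt}. The only cosmetic difference is that you group the integral into $k$ blocks of length $\frac1k$ around each co-located pair, whereas the paper sums $n$ per-facility intervals of length $\frac1n$ with the facility at the interval border; both evaluate to the same quantity.
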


\begin{proof}
	Consider $s = (s_1, \ldots, s_n)$. The interval borders $\beta_i  = \frac{i}{n}$ fulfill for any two neighboring intervals  $\mathcal{I}_i(\strategyProfile) = [\beta_{i-1},\beta_i]$, $\mathcal{I}_j(\strategyProfile) = [\beta_i,\beta_{i+1}]$ the equation $(1-\alpha)  |\strategy_{\choiceFunctions(\strategyProfile,b_i)} - \beta_i| + \alpha  \load_{\choiceFunctions(\strategyProfile,b_i)}(\strategyProfile, \clientStrategyProfile) = (1-\alpha)  |\strategy_{\choiceFunctions(\strategyProfile,a_j)} - \beta_i| + \alpha \load_{\choiceFunctions(\strategyProfile,a_j)}(\strategyProfile, \clientStrategyProfile)$.
	So each facility ${\player}$ has the load $\load_{\player}(\strategyProfile, \clientStrategyProfile) = \frac{1}{n}$ and is located at her interval border. Hence, for the clients' cost it follows $$\socialcosts(\strategyProfile, \clientStrategyProfile) = n \left(\int_0^\frac{1}{n} \left(1-\alpha\right) \left(\frac{1}{n}-x\right) +\frac{\alpha}{n}\ \text{d}x\right) = \frac{1+\alpha}{2n}.$$ With Lemma \ref{lemma:cost_opt} the statement follows.
\end{proof}

\begin{theorem}
	Given a game with $n = 2k-1$ players for some $k \in \mathbb{N}$ and the state $\strategyProfile_\text{pair} = (s_1, \ldots,s_{k-1}, s_k, s_{k+2}, \ldots, s_{n+1})$ with $s_{2 i -1} = s_{2i } = \frac{2i -1}{2k}$ for $i \in \{1, \ldots, k\}$, then $\quality(\strategyProfile_\text{pair}, \clientStrategyProfile) \leq \frac{8(1+\alpha)n^2}{(1+3\alpha)(1+n)^2}$.
\end{theorem}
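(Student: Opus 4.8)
The plan is to bound $\socialcosts(\strategyProfile_\text{pair},\clientStrategyProfile)$ for the $\strategyProfile_\text{pair}$-induced client equilibrium $\clientStrategyProfile$ \emph{without} solving for $\clientStrategyProfile$ explicitly, using the potential $\potentialFunction$ from Lemma~\ref{lem_local_pot_min_is_EQ}. The starting point is an exact identity: for \emph{any} measurable client mapping, $\int_0^1 \load_{\clientStrategyProfile(\strategyProfile,x)}(\strategyProfile,\clientStrategyProfile)\,\text{d}x = \sum_i \loadPlayer(\strategyProfile,\clientStrategyProfile)^2$ (the level set $\{x:\clientStrategyProfile(\strategyProfile,x)=i\}$ has measure $\loadPlayer$ and the integrand is constant $\loadPlayer$ on it), so $\socialcosts(\strategyProfile,\clientStrategyProfile) = (1-\alpha)\int_0^1 \delta(x,\clientStrategyProfile(\strategyProfile,x))\,\text{d}x + \alpha\sum_i \loadPlayer(\strategyProfile,\clientStrategyProfile)^2 = \potentialFunction(\strategyProfile,\clientStrategyProfile) + \frac{\alpha}{2}\sum_i \loadPlayer(\strategyProfile,\clientStrategyProfile)^2$. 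Since $(1-\alpha)\int_0^1\delta \ge 0$ one also has $\frac{\alpha}{2}\sum_i \loadPlayer^2 \le \potentialFunction$, and therefore $\socialcosts(\strategyProfile,\clientStrategyProfile) \le 2\,\potentialFunction(\strategyProfile,\clientStrategyProfile)$ for every measurable mapping, in particular for $\clientStrategyProfile = \clientStrategyProfile_{\strategyProfile_\text{pair}}$.

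Next I would compare $\potentialFunction$ at the client equilibrium with $\potentialFunction$ at an explicit, easy mapping $\clientStrategyProfile_0$: the ``naive'' mapping that assigns to the facility (or co-located pair) sitting at $\frac{2i-1}{2k}$ the client interval $[\frac{i-1}{k},\frac{i}{k}]$, for $i \in \{1,\dots,k\}$. Exactly $k-1$ of these $k$ positions host a co-located pair (which $\clientStrategyProfile_0$ splits into the two length-$\frac{1}{2k}$ halves) and exactly one hosts a single facility. Since the induced client equilibrium minimizes $\potentialFunction$ --- by Lemma~\ref{lem_local_pot_min_is_EQ} it is a local minimum, and by uniqueness of the proper client equilibrium it is the global minimum over proper client mappings --- we get $\potentialFunction(\strategyProfile_\text{pair},\clientStrategyProfile) \le \potentialFunction(\strategyProfile_\text{pair},\clientStrategyProfile_0)$. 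A direct computation for $\clientStrategyProfile_0$ gives loads $\frac{1}{2k}$ for each of the $2(k-1)$ paired facilities and $\frac{1}{k}$ for the single facility, hence $\sum_i \loadPlayer^2 = \frac{k+1}{2k^2}$; moreover each of the $k$ positions contributes $\frac{1}{4k^2}$ to $\int_0^1\delta$ whether it is a pair or a singleton, so $\int_0^1\delta = \frac{1}{4k}$; therefore $\potentialFunction(\strategyProfile_\text{pair},\clientStrategyProfile_0) = \frac{(1-\alpha)k+\alpha(k+1)}{4k^2} = \frac{k+\alpha}{4k^2}$.

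Chaining the two bounds gives $\socialcosts(\strategyProfile_\text{pair},\clientStrategyProfile) \le \frac{k+\alpha}{2k^2}$. Writing $n+1 = 2k$, the target $\frac{2(1+\alpha)n}{(n+1)^2}$ equals $\frac{(1+\alpha)(2k-1)}{2k^2}$, and the inequality $\frac{k+\alpha}{2k^2} \le \frac{(1+\alpha)(2k-1)}{2k^2}$ is equivalent to $0 \le (k-1)(1+2\alpha)$, which holds for all $k\ge 1$ and $\alpha\in[0,1]$. Thus $\socialcosts(\strategyProfile_\text{pair},\clientStrategyProfile) \le \frac{2(1+\alpha)n}{(n+1)^2}$, and dividing by $\text{opt} = \socialcosts(\strategyProfile_\text{opt},\clientStrategyProfile) = \frac{1+3\alpha}{4n}$ (Lemma~\ref{lemma:cost_opt}) yields $\quality(\strategyProfile_\text{pair},\clientStrategyProfile) = \frac{\socialcosts(\strategyProfile_\text{pair},\clientStrategyProfile)}{\text{opt}} \le \frac{2(1+\alpha)n/(n+1)^2}{(1+3\alpha)/(4n)} = \frac{8(1+\alpha)n^2}{(1+3\alpha)(1+n)^2}$.

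The only genuinely delicate point is the assertion that the induced client equilibrium is the \emph{global} minimizer of $\potentialFunction$ (this is exactly what licenses the comparison against the hand-picked $\clientStrategyProfile_0$); once that is granted, the rest is the bound $\socialcosts \le 2\potentialFunction$ plus elementary algebra. To nail it down one can either argue via continuity of $\potentialFunction$ on the compact set of proper client mappings together with Lemma~\ref{lem_local_pot_min_is_EQ} (and the fact that, for fixed loads, the sorted assignment minimizes the total client distance, so restricting to proper mappings loses nothing), or, to sidestep global minimality entirely, start a potential-decreasing sequence of improving client $\varepsilon$-deviations from $\clientStrategyProfile_0$: since $\potentialFunction$ is an exact potential it never increases along such a sequence, and the sequence converges to the unique proper client equilibrium, giving again $\potentialFunction(\strategyProfile_\text{pair},\clientStrategyProfile) \le \potentialFunction(\strategyProfile_\text{pair},\clientStrategyProfile_0)$.
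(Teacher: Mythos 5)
Your proof is correct, but it takes a genuinely different route from the paper's. The paper argues directly about the $\strategyProfile_\text{pair}$-induced client equilibrium: it first shows, via the clients' equilibrium condition, that every facility's load is at least $\frac{1}{n+1}$, deduces that every load is at most $\frac{2}{n+1}$ and hence that every client is within distance $\frac{2}{n+1}$ of her chosen facility, and then integrates the client cost over the resulting intervals to get $\socialcosts(\strategyProfile_\text{pair},\clientStrategyProfile)\le \frac{2n(1+\alpha)}{(n+1)^2}$, after which it divides by the value from Lemma~\ref{lemma:cost_opt} exactly as you do. You instead never analyze the equilibrium loads at all: you use the exact identity $\socialcosts=\potentialFunction+\frac{\alpha}{2}\sum_i\loadPlayer^2\le 2\potentialFunction$ and compare the potential at the equilibrium against the potential of the explicit centered mapping $\clientStrategyProfile_0$, arriving at $\socialcosts\le\frac{k+\alpha}{2k^2}\le\frac{2(1+\alpha)n}{(n+1)^2}$, the same intermediate bound. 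Your computations of $\int_0^1\delta=\frac{1}{4k}$, $\sum_i\loadPlayer^2=\frac{k+1}{2k^2}$ and the final algebra all check out. What your approach buys is generality and robustness: it bounds the social cost of the client equilibrium of \emph{any} placement by twice the potential of any convenient test mapping, with no case analysis; the price is the factor-$2$ slack in $\socialcosts\le2\potentialFunction$ (harmless here) and the one point you rightly flag, namely that Lemma~\ref{lem_local_pot_min_is_EQ} only certifies \emph{local} minimality, whereas you need $\potentialFunction(\clientStrategyProfile)\le\potentialFunction(\clientStrategyProfile_0)$. That gap is real but closable along the first of your two suggested lines: restricted to proper client mappings, $\potentialFunction$ is a continuous function of the interval borders $(\beta_1,\dots,\beta_{n-1})$ ranging over a compact simplex, so a global minimizer exists; it is in particular a local minimizer, hence by Lemma~\ref{lem_local_pot_min_is_EQ} a client equilibrium, hence by uniqueness it is the $\strategyProfile_\text{pair}$-induced equilibrium, and since $\clientStrategyProfile_0$ is itself proper the comparison follows. (Your alternative via a potential-decreasing sequence of $\varepsilon$-deviations would additionally require a convergence argument and is harder to make rigorous than the compactness route.)
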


\begin{proof}
	Consider $s = (s_1, \ldots, s_n)$. First we show that $\load_{\player}(\strategyProfile, \clientStrategyProfile) \geq \frac{1}{n+1}$ for every facility $\player$. 
	We consider the facility $\player$ with the smallest load $\load_{\player}(\strategyProfile, \clientStrategyProfile)$, so $\load_{\player}(\strategyProfile, \clientStrategyProfile) \leq \load_{j}(\strategyProfile, \clientStrategyProfile)$ for $j \neq \player$. Assume there is a facility~$\player$ with $\load_{\player}(\strategyProfile, \clientStrategyProfile) < \frac{1}{n+1}$. For the clients $\client \in [s_i - \frac{1}{n+1}, s_i + \frac{1}{n+1}]$ it holds that $\left| \strategy_\player - \client \right| < \left| \strategy_j - \client \right|$ for all facilities $j$ with $s_i \neq s_j$. Since there is at most one other facility $j$ with $\strategy_j = \strategy_\player$, it follows that there exists a client $\client \in [s_i - \frac{1}{n+1}, s_i + \frac{1}{n+1}]$ with $\costsClient(\strategyProfile,\clientStrategyProfile) = (1-\alpha) \cdot |\strategy_{\choiceFunctions(\strategyProfile,\client)} - \client| + \alpha \cdot \load_{\choiceFunctions(\strategyProfile,\client)}(\strategyProfile, \clientStrategyProfile) > (1-\alpha) \cdot |\strategy_{\player} - \client| + \alpha \cdot \load_{\player}(\strategyProfile, \clientStrategyProfile)$. This contradicts the definition of $\rho$-SPE, since $\client$ can decrease her cost by changing her strategy towards facility $\player$.
	
	It follows for all facilities that  $\load_{\player}(\strategyProfile, \clientStrategyProfile) \leq \frac{2}{n+1}$ and therefore $\left|\strategy_\player - \beta_\player\right| < \frac{2}{n+1}$ and $\left|\strategy_\player - \beta_{\player+1}\right|\\ < \frac{2}{n+1}$, respectively. Hence, it follows for the clients' cost  $$\socialcosts(\strategyProfile, \clientStrategyProfile) = n \left(\int_0^\frac{1}{n+1}\left( 1- \alpha \right) \left(\frac{2}{n+1}-x\right) + \frac{2\alpha}{n+1}\ \text{d}x\right) = \frac{2n(\alpha+1)}{(n+1)^2}.$$ With Lemma \ref{lemma:cost_opt} the statement follows.
\end{proof}

\begin{figure}[htb]
	\centering
	\begin{tikzpicture}
	\begin{axis}[xmin=0, xmax=1, ymin=0.5, ymax=9.5, samples at={0,0.05,...,0.95,1}, width=\columnwidth, height=4.5cm, xlabel={$\alpha$}, ylabel={$\quality(\strategyProfile, \clientStrategyProfile)$}, ylabel near ticks, xtick={0,0.2,0.4,0.6,0.8,1},ytick={1,3,5,7,9}, legend style={legend columns=2}, x label style={ below=-3mm}]
	\addplot[red, solid] {(8*(1+x)*1001^2)/((1+3*x)*(1+1001)^2)}; 
	\addplot[red, dashed] {(8*(1+x)*101^2)/((1+3*x)*(1+101)^2)}; 
	\addplot[orange, solid] {(8*(1+x)*9^2)/((1+3*x)*(1+9)^2)}; 
	\addplot[orange, dashed] {(8*(1+x)*7^2)/((1+3*x)*(1+7)^2)};
	\addplot[green, solid] {(8*(1+x)*5^2)/((1+3*x)*(1+5)^2)};
	\addplot[blue, solid] {(2*x+2)/(3*x+1)};
	\legend{ $n=1001$, $n=101$,  $n=9$, $n=7$,  $n=5$, $n$ even}
	\end{axis}
	\end{tikzpicture}
	\vspace*{-1em} 
	\caption{Quality of the $\rho$-SPE for $\strategyProfile_\text{pair}$ as a function of $\alpha$.}
	\label{fig:facilityQuality}
\end{figure}

\section{Agent-based Simulation}

Kohlberg's model~\cite{KOHLBERG1983211} assumes that the clients are continuously distributed along the linear market, i.e. the interval $[0,1]$ is considered and every point in $[0,1]$ corresponds to a client. This continuous setting is an abstraction from reality and essentially models the case where there are significantly more clients than facilities. Moreover the continuous setting is crucial for our analysis in Section~\ref{sec:analytical} since it enables us to derive analytical results by solving a suitably chosen system of equations. However, as indicated in Section~\ref{sec:analytical}, this approach is tedious to work with and generalizing the results to obtain a closed form solution which depends on $n$ and $\alpha$ seems to be hopeless. In particular, the case for odd $n$ does not yield a symmetric system of equations. Moreover, our proofs cannot be adapted to the discrete version, where we have only a finite number of clients which are spread evenly in the interval $[0,1]$. 

For addressing both problems, the lack of analytical tractability and the transfer of our results to the discrete version, we resort to an agent-based approach. This allows us to derive more general results and to support our conjectures in Section~\ref{sec:analytical}. 
\subsection{Simulation Set-up}
We discretize our model by fixing the total number of clients to some arbitrary value $P$, which we will also call the \emph{precision}. In any discrete instance with exactly $P$ clients we assume that the $P$ clients sit at equally spaced positioned locations in the interval $[0,1]$. More precisely, we assume that the interval $[0,1]$ is subdivided into $P$ consecutive intervals $I_1,\dots,I_P$ of size $\frac{1}{P}$ and that the position of the $i$-th client is the center point $z_i$ of subinterval $I_i$, i.e. $z_i = \frac{i}{p} - \frac{1}{2P}$.

We assume that every client has a weight of $\frac{1}{P}$ and that the total weight assigned to facility $j$ under some client distribution is the sum of the weights of all clients which are assigned to the respective facility and that facilities want to maximize their assigned total weight. Moreover, we assume that facility agents can only select a location from the set $\{z_1,\dots,z_P\}$, i.e. facilities can only be placed on client locations. Note, that if $P\to \infty$ then our discrete model resembles the continuous model. Thus, with increasing precision we can more closely approximate the analytical solution. Our experiments revealed that a precision of $500n$ is sufficient to get very accurate results for $n$ facilities (see Figure~\ref{fig:precision}). 
\begin{figure}[h]
	\centering
	\includegraphics[width=13cm]{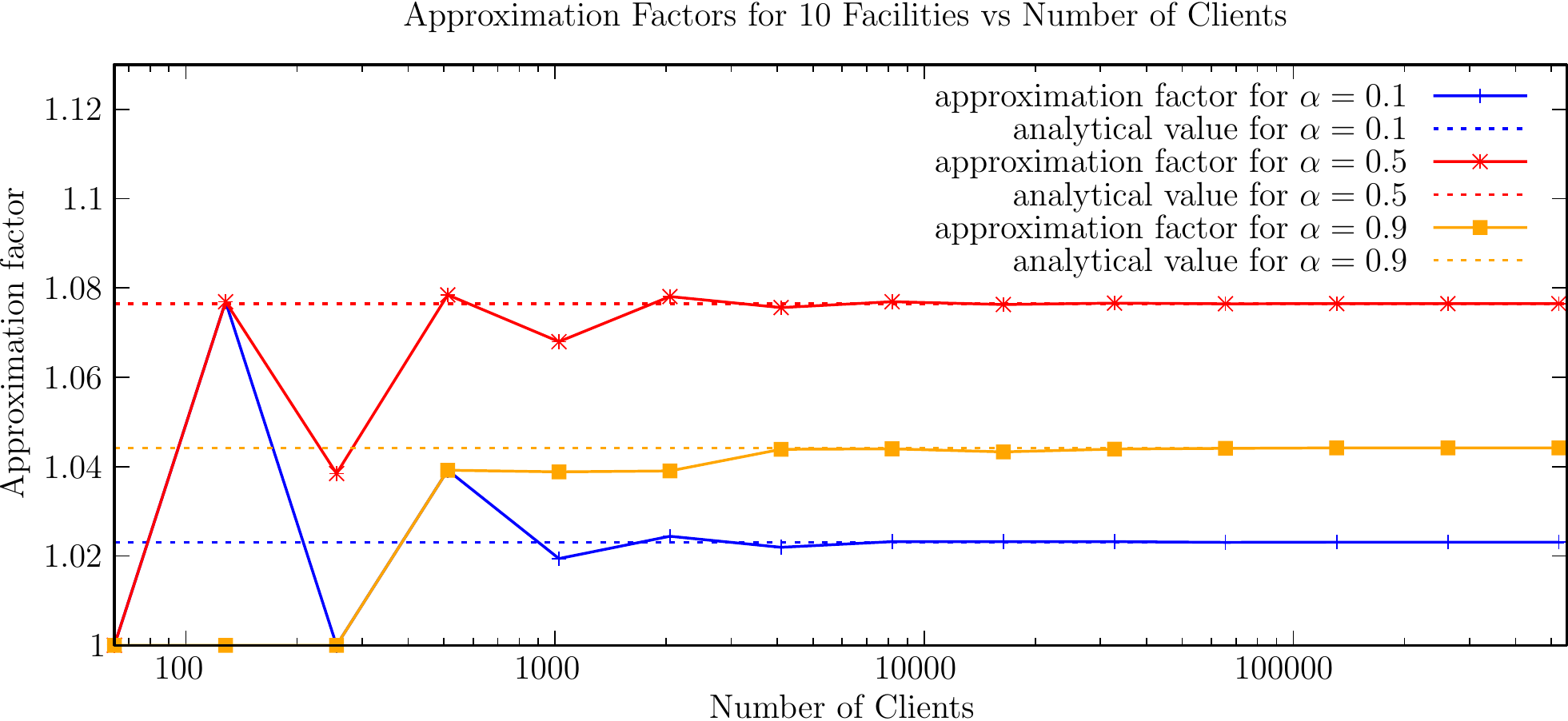}
	\caption{Empirically observed highest approximation factor for the $\rho$-SPE for $n=10$ and $\alpha \in \{0.1,0.5,0.9\}$ compared with its analytical value plotted for increasing precision.}
	\label{fig:precision}
\end{figure}
Moreover, even for fewer numbers of clients, i.e. a lower precision, the obtained results are still very close to the analytical prediction from the continuous model. This emphasizes the value of the continuous model in predicting the behavior of the discrete model.   

\paragraph{Client Simulation} Clients are modeled as selfish autonomous agents which strategically select a facility to minimize their cost. For a given strategy profile $(\strategyProfile,\clientStrategyProfile)$ the cost of client $i$ at position $z_i$ is  
$$C_{z_i}(\strategyProfile,\clientStrategyProfile) = (1-\alpha) \cdot |\strategy_{\choiceFunctions(\strategyProfile,\client_i)} - \client_i| + \alpha \cdot \load_{\choiceFunctions(\strategyProfile,\client_i)}(\strategyProfile, \clientStrategyProfile),$$ where $\load_{\choiceFunctions(\strategyProfile,\client_i)}(\strategyProfile, \clientStrategyProfile) = \sum_{j:\choiceFunctions(\strategyProfile,\client_j) = \choiceFunctions(\strategyProfile,\client_i) } \frac{1}{P}$.

For given fixed facility locations $\strategyProfile = (s_1,\dots,s_n)$, with $s_j \in \{z_1,\dots,z_P\}$ for $1\leq j \leq n$, we invoke round-robin best response dynamics to obtain the \emph{empirical client equilibrium distribution} $D(\strategyProfile)$. There, starting from a fixed initial assignment of clients to facilities, clients are activated in a fixed order and update their strategy with their current best response strategy. By using the discrete analogue of the potential function $\Phi(\strategyProfile,\choiceFunction)$ from the continuous setting, it is straightforward to show that this process converges. Moreover, since the client equilibrium in the continuous setting is unique and since we fix the client activation order and use consistent tie-breaking, the empirical client equilibrium distribution $D(\strategyProfile)$ is unique for any fixed facility placement $\strategyProfile$.  

\paragraph{Facility Simulation} Given a facility placement $\strategyProfile$ and the induced empirical client equilibrium distribution $D(s)$ we compute the best response strategy of a facility $j$ by simply trying all possible locations $z\in \{z_1,\dots,z_P\}$ and computing the induced utility, which equals the load of facility $j$, of each location with the induced empirical client equilibrium distribution $D(z,s_{-j})$. Let $z^*$ denote facility $j$'s best response, then we compute facility $j$'s \emph{highest possible improvement factor} as $\frac{u_j((z^*,s_{-j}),D(z^*,s_{-j}))}{u_j(s,D(s))}$, i.e. the ratio between facility~$j$'s best achievable utility and her current utility.

\paragraph{Computing the Approximation Factor $\rho$} For a given facility placement $\strategyProfile$ and it's corresponding empirical client equilibrium distribution $D(s)$ we obtain the approximation factor $\rho$ of placement $\strategyProfile$ by simply taking the maximum over all facilities of their highest possible improvement factors.

\subsection{Empirical Support for Our Conjectures}
\label{subsec:empirical_support_conjecture}

Our analysis of the continuous model in Section~\ref{sec:analytical}, especially the proofs of Theorems~\ref{theorem:optimum_arbitrary_n} and~\ref{theorem:arbitrary_n_double} crucially relies on Conjectures~\ref{conj:optimum_highest_improvement} and~\ref{conj:double_highest_improvement}, respectively.
While being very challenging to prove analytically, the conjectures can be easily verified with our agent-based approach. For this we compute for a given facility location vector $s \in \{s_{\text{opt}},s_{\text{pair}}\}$ the highest possible improvement factor for each facility (see Figure~\ref{fig:run_plot} for results with $s_{\text{pair}}$).  
\begin{figure}[ht]
	\centering
	\includegraphics[width=13cm]{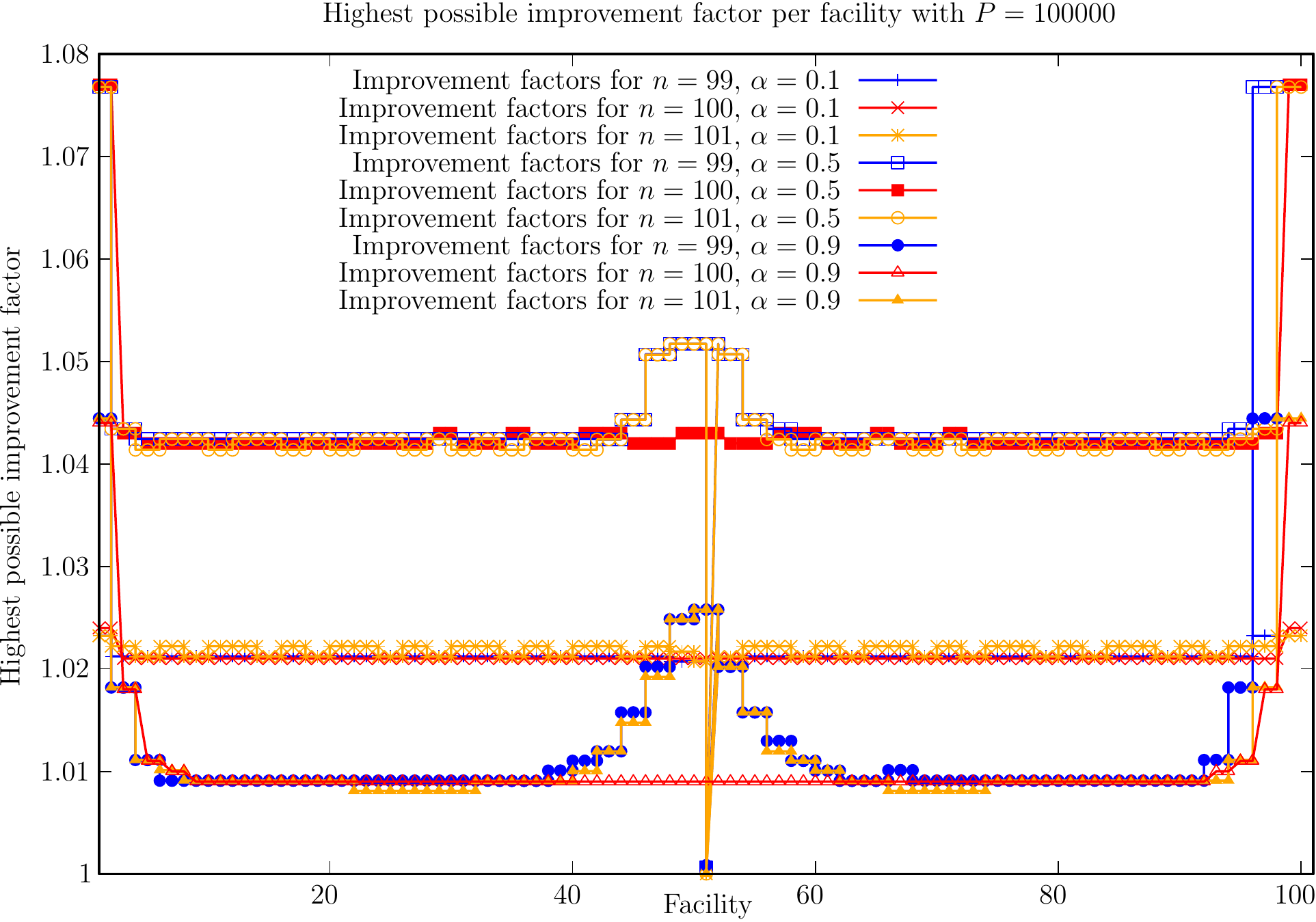}
	\caption{Empirical support for Conjecture~\ref{conj:double_highest_improvement}. Observed improvement factors for each facility for $n \in \{99,100,101\}$ with $P=100000$ and $\alpha \in \{0.1,0.5,0.9\}$ for locations $s_{\text{pair}}$. }
	\label{fig:run_plot}
\end{figure}
We observe that independently from $\alpha$ and $n$ we find that the four outermost facilities which sit at locations $s_1 = s_2$ and $s_{n-1} = s_n$ have the highest improvement factor among all facilities. Moreover our simulations also confirm that the best possible new facility location is the inner border of their assigned client interval, i.e. the location of the client which is assigned to the facility and at the same time has a location as close to $0.5$ as possible. Figure~\ref{fig:run_plot} depicts our obtained results for supporting Conjecture~\ref{conj:double_highest_improvement}. We have similar results regarding Conjecture~\ref{conj:optimum_highest_improvement} but we have to omit them due to space constraints.

\subsection{Worst Approximation Ratio over all $\alpha$}
Finally, we address how the aproximation factor $\rho$ behaves for growing $n$. For this we empirically computed $\rho$ for $n=3$ to $n=100$, where for each $3\leq n \leq 100$ we evaluated every $\alpha$ from $0$ to $1$ in steps of $0.01$. Figure~\ref{fig:peak_plot} shows the maximum approximation factor $\rho$ over all evaluated $\alpha$ for each number of facilities $n$. To avoid numerical issues we scaled $P$ with $n$ as $P = 1000n$. 
\begin{figure}[ht]
	\centering
	\includegraphics[width=13cm]{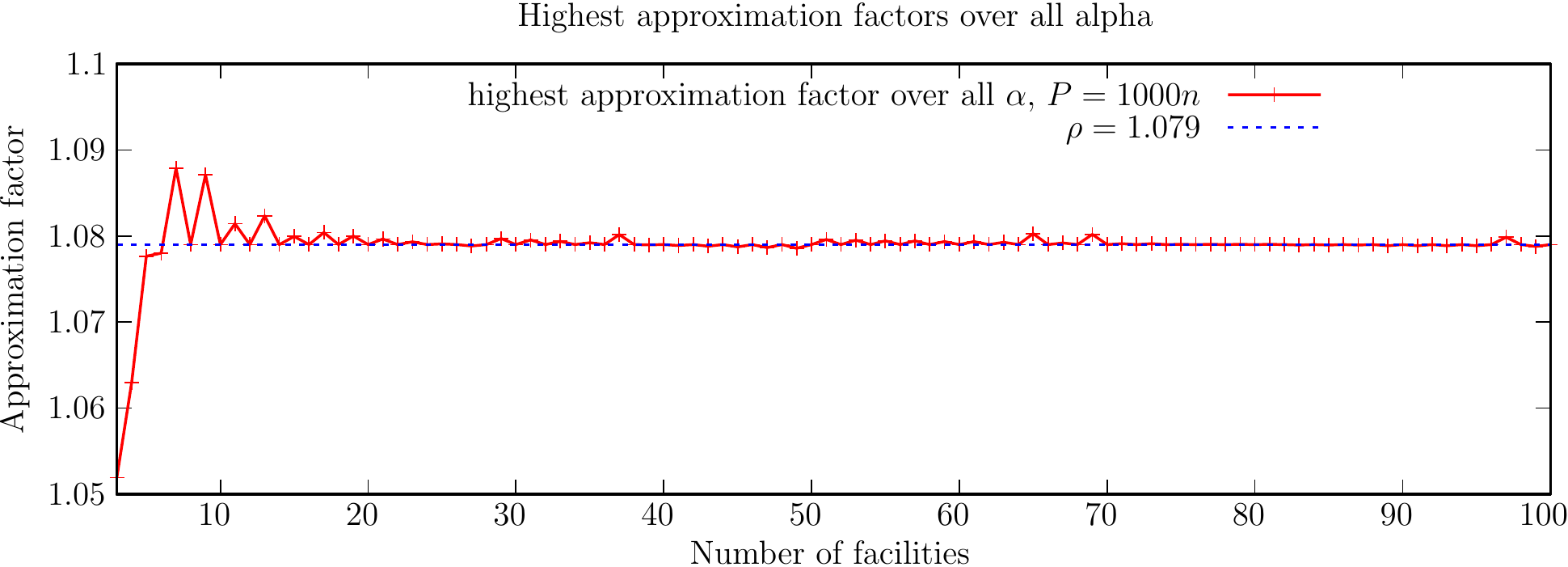}
	\caption{Observed worst approximation ratio for $3\leq n \leq 100$ over all $0\leq \alpha \leq 1$ with precision $P = 1000n$. The maximum $\rho$ approaches $\rho=1.079$ as $n$ increases.}
	\label{fig:peak_plot}
\end{figure}
Our simulation shows that the observed $\rho$ converges to $\rho = 1.079$ as $n$ increases and that the highest approximation factor is obtained for $\alpha = 0.55$. This implies that the investigated approximate subgame perfect equilibria are close to exact equilibria, since the facility agents can only improve their utility by at most $8\%$ by deviating.
\section{Conclusion}
We demonstrated the existence of approximate equilibria with low approximation factors and which adhere to the principle of minimum differentiation for Kohlberg's model. This remarkble contrast to the results of Peters et al.~\cite{Peters2018} indicates that studying approximate equilibria may yield more realistic results than solely focusing on exact equilibria. Moreover, investigating approximate equilibria may also lead to new insights for other models in the realm of Location Analysis.

\bibliographystyle{abbrv}
\bibliography{references}

\begin{thebibliography}{10}

\bibitem{ahlin2013product}
C.~Ahlin and P.~D. Ahlin.
\newblock Product differentiation under congestion: Hotelling was right.
\newblock {\em Economic Inquiry}, 51(3):1750--1763, 2013.

\bibitem{DBLP:journals/tcs/AhnCCGO04}
H.~Ahn, S.~Cheng, O.~Cheong, M.~J. Golin, and R.~van Oostrum.
\newblock Competitive facility location: the voronoi game.
\newblock {\em Theor. Comput. Sci.}, 310(1-3):457--467, 2004.

\bibitem{DBLP:journals/tcs/BandyapadhyayBDS15}
S.~Bandyapadhyay, A.~Banik, S.~Das, and H.~Sarkar.
\newblock Voronoi game on graphs.
\newblock {\em Theor. Comput. Sci.}, 562:270--282, 2015.

\bibitem{DBLP:conf/wine/Ben-PoratT17}
O.~Ben{-}Porat and M.~Tennenholtz.
\newblock Shapley facility location games.
\newblock In N.~R. Devanur and P.~Lu, editors, {\em WINE'17}, volume 10660 of
  {\em LNCS}, pages 58--73. Springer, 2017.

\bibitem{boulding1941economic}
K.~E. Boulding.
\newblock {\em Economic analysis}.
\newblock Harper and brothers Publishers, London, 1941.

\bibitem{doi:10.1002/9780470400531.eorms0477}
S.~Brenner.
\newblock {\em Location (Hotelling) Games and Applications}.
\newblock American Cancer Society, 2011.

\bibitem{DBLP:journals/tcs/CardinalH10}
J.~Cardinal and M.~Hoefer.
\newblock Non-cooperative facility location and covering games.
\newblock {\em Theor. Comput. Sci.}, 411(16-18):1855--1876, 2010.

\bibitem{10.2307/1911955}
C.~d'Aspremont, J.~J. Gabszewicz, and J.-F. Thisse.
\newblock On hotelling's "stability in competition".
\newblock {\em Econometrica}, 47(5):1145--1150, 1979.

\bibitem{10.2307/20075765}
A.~de~Palma and L.~Leruth.
\newblock Congestion and game in capacity: A duopoly analysis in the presence
  of network externalities.
\newblock {\em Annales d'Economie et de Statistique}, (15/16):389--407, 1989.

\bibitem{doi:10.1086/257897}
A.~Downs.
\newblock An economic theory of political action in a democracy.
\newblock {\em Journal of Political Economy}, 65(2):135--150, 1957.

\bibitem{drees2016strategic}
M.~Drees, B.~Feldkord, and A.~Skopalik.
\newblock Strategic online facility location.
\newblock In {\em International Conference on Combinatorial Optimization and
  Applications}, pages 593--607. Springer, 2016.

\bibitem{DBLP:conf/esa/DurrT07}
C.~D{\"{u}}rr and N.~K. Thang.
\newblock Nash equilibria in voronoi games on graphs.
\newblock In L.~Arge, M.~Hoffmann, and E.~Welzl, editors, {\em Algorithms -
  {ESA} 2007, 15th Annual European Symposium, Eilat, Israel, October 8-10,
  2007, Proceedings}, volume 4698 of {\em Lecture Notes in Computer Science},
  pages 17--28. Springer, 2007.

\bibitem{Eaton1975}
B.~Eaton and R.~Lipsey.
\newblock The principle of minimum differentiation reconsidered: Some new
  developments in the theory of spatial competition.
\newblock 42:27--49, 02 1975.

\bibitem{DBLP:journals/transci/EiseltLT93}
H.~A. Eiselt, G.~Laporte, and J.~Thisse.
\newblock Competitive location models: {A} framework and bibliography.
\newblock {\em Transportation Science}, 27(1):44--54, 1993.

\bibitem{feldman2016voting}
M.~Feldman, A.~Fiat, and I.~Golomb.
\newblock On voting and facility location.
\newblock In {\em Proceedings of the 2016 ACM Conference on Economics and
  Computation}, pages 269--286. ACM, 2016.

\bibitem{DBLP:conf/aaai/FeldmanFO16}
M.~Feldman, A.~Fiat, and S.~Obraztsova.
\newblock Variations on the hotelling-downs model.
\newblock In D.~Schuurmans and M.~P. Wellman, editors, {\em AAAI'16}, pages
  496--501. {AAAI} Press, 2016.

\bibitem{fotakis2014power}
D.~Fotakis and C.~Tzamos.
\newblock On the power of deterministic mechanisms for facility location games.
\newblock {\em ACM Transactions on Economics and Computation}, 2(4):15, 2014.

\bibitem{DBLP:journals/corr/Fournier16}
G.~Fournier.
\newblock General distribution of consumers in pure hotelling games.
\newblock {\em CoRR}, abs/1602.04851, 2016.

\bibitem{DBLP:journals/corr/FournierS16}
G.~Fournier and M.~Scarsini.
\newblock Location games on networks: Existence and efficiency of equilibria.
\newblock {\em CoRR}, abs/1601.07414, 2016.

\bibitem{DBLP:journals/jal/GoemansS04}
M.~X. Goemans and M.~Skutella.
\newblock Cooperative facility location games.
\newblock {\em J. Algorithms}, 50(2):194--214, 2004.

\bibitem{GRILO2001385}
I.~Grilo, O.~Shy, and J.-F. Thisse.
\newblock Price competition when consumer behavior is characterized by
  conformity or vanity.
\newblock {\em Journal of Public Economics}, 80(3):385 -- 408, 2001.

\bibitem{DBLP:journals/or/Heikkinen14}
T.~Heikkinen.
\newblock A spatial economic model under network externalities: symmetric
  equilibrium and efficiency.
\newblock {\em Operational Research}, 14(1):89--111, 2014.

\bibitem{10.2307/2224214}
H.~Hotelling.
\newblock Stability in competition.
\newblock {\em The Economic Journal}, 39(153):41--57, 1929.

\bibitem{KOHLBERG1983211}
E.~Kohlberg.
\newblock Equilibrium store locations when consumers minimize travel time plus
  waiting time.
\newblock {\em Economics Letters}, 11(3):211 -- 216, 1983.

\bibitem{navon1995product}
A.~Navon, O.~Shy, J.-F. Thisse, et~al.
\newblock {\em Product differentiation in the presence of positive and negative
  network effects}.
\newblock Centre for Economic Policy Research, 1995.

\bibitem{Nunez2016}
M.~N{\'u}{\~{n}}ez and M.~Scarsini.
\newblock Competing over a finite number of locations.
\newblock {\em Economic Theory Bulletin}, 4(2):125--136, Oct. 2016.

\bibitem{Nunez2017}
M.~N{\'u}{\~{n}}ez and M.~Scarsini.
\newblock {\em Large Spatial Competition}, pages 225--246.
\newblock Springer International Publishing, Cham, 2017.

\bibitem{osborne1986nature}
M.~J. Osborne and C.~Pitchik.
\newblock The nature of equilibrium in a location model.
\newblock {\em International Economic Review}, pages 223--237, 1986.

\bibitem{palvolgyi2011hotelling}
D.~P{\'a}lv{\"o}lgyi.
\newblock Hotelling on graphs.
\newblock In {\em URL http://media. coauthors.
  net/konferencia/conferences/5/palvolgyi. pdf. Mimeo}, 2011.

\bibitem{Peters2018}
H.~Peters, M.~Schr{\"o}der, and D.~Vermeulen.
\newblock Hotelling's location model with negative network externalities.
\newblock {\em International Journal of Game Theory}, Feb. 2018.

\bibitem{DBLP:journals/eor/RevelleE05}
C.~S. Revelle and H.~A. Eiselt.
\newblock Location analysis: {A} synthesis and survey.
\newblock {\em European Journal of Operational Research}, 165(1):1--19, 2005.

\bibitem{DBLP:conf/wine/SabanM12}
D.~Sab{\'{a}}n and N.~S. Moses.
\newblock The competitive facility location problem in a duopoly: Connections
  to the 1-median problem.
\newblock In P.~W. Goldberg, editor, {\em Internet and Network Economics - 8th
  International Workshop, {WINE} 2012, Liverpool, UK, December 10-12, 2012.
  Proceedings}, volume 7695 of {\em Lecture Notes in Computer Science}, pages
  539--545. Springer, 2012.

\bibitem{10.2307/3003323}
S.~C. Salop.
\newblock Monopolistic competition with outside goods.
\newblock {\em The Bell Journal of Economics}, 10(1):141--156, 1979.

\bibitem{shen2017hotelling}
W.~Shen and Z.~Wang.
\newblock Hotelling-downs model with limited attraction.
\newblock In {\em Proceedings of the 16th Conference on Autonomous Agents and
  MultiAgent Systems}, pages 660--668. International Foundation for Autonomous
  Agents and Multiagent Systems, 2017.

\bibitem{tian2015optimal}
J.~Tian.
\newblock Optimal interval division.
\newblock 2015.

\bibitem{van2016agents}
E.~van Leeuwen and M.~Lijesen.
\newblock Agents playing hotelling’s game: an agent-based approach to a game
  theoretic model.
\newblock {\em The Annals of Regional Science}, 57(2-3):393--411, 2016.

\bibitem{DBLP:conf/focs/Vetta02}
A.~Vetta.
\newblock Nash equilibria in competitive societies, with applications to
  facility location, traffic routing and auctions.
\newblock In {\em 43rd Symposium on Foundations of Computer Science {(FOCS}
  2002), 16-19 November 2002, Vancouver, BC, Canada, Proceedings}, page 416.
  {IEEE} Computer Society, 2002.

\bibitem{vocking2007selfish}
B.~V{\"o}cking.
\newblock Selfish load balancing.
\newblock {\em Algorithmic game theory}, 20:517--542, 2007.

\end{thebibliography}

\end{document}